\newtheorem{theorem}{Theorem}[section]
\newtheorem{lemma}[theorem]{Lemma}
\newtheorem{corollary}[theorem]{Corollary}
\newtheorem{remark}{Remark}
\begin{document}

\title{Restless Multi-Process Multi-Armed Bandits\\ with Applications to Self-Driving Microscopies
\thanks{
}
}



\author{
\IEEEauthorblockN{
Jaume Anguera Peris\IEEEauthorrefmark{1}, 
Songtao Cheng\IEEEauthorrefmark{2}, 
Hanzhao Zhang\IEEEauthorrefmark{3}, 
Wei Ouyang\IEEEauthorrefmark{2}, 
Joakim Jaldén\IEEEauthorrefmark{1}}\\
\IEEEauthorblockA{
\IEEEauthorrefmark{1}Department of Information Science and Engineering\\
\IEEEauthorrefmark{2}Department of Applied Physics\\
\IEEEauthorrefmark{3}Department of Biophysics\\
KTH Royal Institute of Technology, Stockholm, Sweden\\
\{jaumeap, jalden\}@kth.se, \{songtao.chen, hanzhao.zhang, wei.ouyang\}@scilifelab.se}
}

\maketitle
\begin{abstract}
High-content screening microscopy generates large amounts of live-cell imaging data, yet its potential remains constrained by the inability to determine \textit{when} and \textit{where} to image most effectively. Optimally balancing acquisition time, computational capacity, and photobleaching budgets across thousands of dynamically evolving regions of interest still remains an open challenge, further complicated by limited field-of-view adjustments and sensor sensitivity. Existing approaches either rely on static sampling or heuristics that neglect the dynamic evolution of biological processes, leading to inefficiencies and missed events. Here, we introduce the restless multi-process multi-armed bandit (RMPMAB), a new decision-theoretic framework in which each experimental region is modeled not as a single process but as an ensemble of Markov chains, thereby capturing the inherent heterogeneity of biological systems such as asynchronous cell cycles and heterogeneous drug responses. Building upon this foundation, we derive closed-form expressions for transient and asymptotic behaviors of aggregated processes, and design scalable Whittle index policies with sub-linear complexity in the number of imaging regions. Through both simulations and a real biological live-cell imaging dataset, we show that our approach achieves substantial improvements in throughput under resource constraints. Notably, our algorithm outperforms Thomson Sampling, Bayesian UCB, $\epsilon$-Greedy, and Round Robin by reducing cumulative regret by more than $37\%$ in simulations and capturing $93\%$ more biologically relevant events in live imaging experiments, underscoring its potential for transformative smart microscopy. Beyond improving experimental efficiency, the RMPMAB framework unifies stochastic decision theory with optimal autonomous microscopy control, offering a principled approach to accelerate discovery across multidisciplinary sciences.
\end{abstract}

\begin{IEEEkeywords}
Restless bandits, Whittle index, multi-armed bandits, Markov chains, smart microscopy, self-driving microscopes.
\end{IEEEkeywords}

\section{Introduction}
\label{sec:introduction}
High content screening (HCS) microscopy has revolutionized live cell imaging by enabling the monitoring of cellular responses across thousands of regions of interest. Despite this capability, many critical biological events, such as transient drug effects or rare cellular transitions, frequently go unobserved due to inherent limitations in the imaging process. Constraints imposed by slow hardware navigation between fields of view, photobleaching and phototoxicity, the adjustment of imaging parameters, or the need to manage large numbers of samples in real time not only hinder progress but also introduce potential errors~\cite{mantz2025ai}. These challenges are exacerbated in smart microscopy, where automated workflows remain susceptible to variability, particularly when dealing with heterogeneous cell populations. Together, these limitations highlight the need for intelligent, autonomous systems capable of navigating the complexities of live-cell imaging under resource constraints~\cite{hinderling2025smart}.

To address these challenges, this paper focuses on large-scale and resource-constrained environments on HCS applications, where multi-well plates containing diverse cellular populations must be monitored for rare or transient events occurring asynchronously across different spatial locations~\cite{carreras2024artificial}. By integrating automated microscopy with advanced image analysis, HCS microscopes enable researchers to quantify and correlate drug effects on cellular events or targets by simultaneously measuring multiple signals from the same cell population, yielding data with a higher content of biological information than what is provided by single-target screens~\cite{zock2009applications}. However, the fundamental challenge in autonomous microscopy remains optimal resource allocation, as limited imaging time, computational resources, and photobleaching budgets require autonomous systems to make intelligent decisions about which regions to observe while biological processes evolve continuously and independently across thousands of samples~\cite{tom2024self}.

The resource allocation challenge in microscopy naturally maps to the multi-armed bandit (MAB) problem from decision theory, where an agent must choose between multiple \textit{arms} with unknown and potentially changing \textit{rewards}. In microscopy contexts, each potential imaging location (field of view, region of interest, or well) represents an arm, and the reward corresponds to the biological information gained from the observation. While classical MAB formulations have shown promise in adaptive sampling and medical imaging, they assume static arm states and fail to account for the stochastic evolution of live cell systems. The restless multi-armed bandit (RMAB) framework addresses this critical limitation by incorporating temporal dynamics, allowing arm states to evolve stochastically regardless of observation~\cite{papadimitriou1999complexity}. In biological applications, this mirrors the reality that cells persist in proliferating, differentiating, or responding to stimuli whether or not they are being imaged. Despite this natural fit between RMAB assumptions and the constraints of biological imaging, its application to smart microscopy remains largely unexplored.

This paper advances the field by proposing a new RMAB formulation tailored to biological imaging. Unlike traditional RMAB models that represent each arm with a single Markov process, we introduce a framework for restless multi-armed bandits in which each arm encapsulates multiple Markov processes. This extension better captures the heterogeneity inherent in biological systems, such as varying activity levels across cells within a single imaging region, different cell cycle phases within a population, or spatially distributed cellular responses to drug treatments. Our framework bridges HCS with RL-based decision theory, providing a principled approach to prioritize regions most likely to yield valuable biological information. In particular, our work makes three key contributions that span theory, algorithms, and applications. First, from a theoretical perspective, we establish a rigorous mathematical framework for analyzing systems of multiple Markov processes, deriving both exact and asymptotic expressions for their collective dynamics. These results extend beyond bandit problems to any application involving aggregated stochastic processes. Second, from an algorithmic standpoint, we derive computationally tractable and scalable Whittle index policies that maintain linear complexity advantages while extending to multi-process settings, enabling real-time decision-making for systems with thousands of observable regions. Third, from an applications perspective, we demonstrate substantial empirical improvement in biological imaging, validated through both simulated environments and experimental stem cell profiling data.

Overall, our novel approach not only ensures equitable and optimal resource allocation but also achieves scalability and precision, accelerating insights into regenerative medicine and cellular dynamics by minimizing observational overhead while maximizing information yield.
Besides, the implications of this manuscript extend beyond microscopy to the broader vision of self-driving laboratories~\cite{abolhasani2023rise}, where autonomous systems must make intelligent decisions about resource allocation across multiple parallel experiments.

\subsection{Related work is optical microscopy}
The integration of machine learning (ML) has rapidly transformed the adaptive capabilities of modern optical microscopy systems. Initially, ML was leveraged primarily post-experiment to enhance data analysis, notably improving cell segmentation, pattern recognition, and phenotype classification to extract richer insights from acquired images. This post-hoc augmentation, while valuable, only supplements traditional workflows without fundamentally altering them~\cite{von2021democratising}. Subsequent real-time augmentation systems began assisting during acquisition through live segmentation and dimensionality reduction to aid operator decisions, yet they still required continuous human supervision for critical decision-making~\cite{mund2021ai}.

More recent developments in reinforcement learning (RL) have empowered microscopes with the ability to select measurement locations dynamically, adapting acquisition protocols in real time based on ongoing data streams and sample status without human intervention. The pioneering work in~\cite{kandel2023demonstration} demonstrated an AI-driven workflow for autonomous high-resolution scanning microscopy, coupling predictive modeling with on-the-fly scan optimization. Similar frameworks have been explored in event-driven microscopy, where neural detectors trigger acquisitions only when dynamic biological events are predicted to occur~\cite{mahecic2022event}. These systems reduce photobleaching and data redundancy but generally rely on threshold-based heuristics rather than provably optimal selection rules. Beyond these, many works address module-level autonomy in microscopy or platform integration for self-driving labs, but they assume that unobserved regions remain static until revisited, which is incompatible with live-cell dynamics where biological states evolve continuously~\cite{mund2021ai}. 

Broader perspectives on computationally steered microscopy envision the field converging toward fully autonomous, closed-loop systems. The forward-looking reviews in \cite{balasubramanian2023imagining} and \cite{carpenter2023smart} both highlight the need for scalable control algorithms that integrate sensing, inference, and actuation across multiple fields of view. However, these works remain conceptual and do not formalize resource allocation across dynamically evolving regions of interest.

Overall, the literature shows a rapid shift from ML-assisted analysis to adaptive acquisition, with the HCS and self-driving microscope communities both calling for principled, optimization-centric controllers \cite{shroff2024live}. In that sense, the theory behind RMAB offers the precise tools needed to allocate observations under evolving, partially-observed dynamics. The contribution of this work is therefore to unify smart and self-driving microscopy under a single decision-theoretic framework tailored to microscopy’s multi-process and resource-constrained environments.

\subsection{Related work in multi-armed bandits}
The field of MABs encompasses a diverse array of frameworks designed to address various sequential decision-making problems under uncertainty, with the most prominent being stochastic bandits, adversarial bandits, contextual bandits, and restless bandits~\cite{nino2023markovian}. Stochastic bandits assume that each arm's rewards are drawn from a fixed, unknown probability distribution, making them suitable for scenarios with stable underlying dynamics. Adversarial bandits, on the other hand, make no statistical assumptions about the reward generation process, allowing for potentially malicious or worst-case scenarios where an adversary may control the rewards. Contextual bandits extend the standard MAB setting by incorporating additional information (context) available to the agent before making each decision, enabling more nuanced decision-making strategies, and restless bandits further generalize the MAB problem by allowing the state of each arm to evolve over time, even when not selected. 

As discussed earlier, our proposed framework in smart microscopy necessitates a sophisticated model that can capture the stochastic nature of the cellular behavior within each well and account for the ongoing changes in unobserved wells. To address these challenges, we introduce a novel framework that we term the restless multi-process multi-armed bandit (RMPMAB). Our approach extends the traditional restless MAB model by incorporating multiple Markov processes within each arm, representing diverse dynamic behaviors within a field of view or a well. The restless nature of our proposed framework models the continuous evolution of cellular states in all regions of interest, mirroring the behavior of cells in the multi-well culture plate. The multi-process representation captures the heterogeneity of cell populations within each well, allowing for a more realistic representation of the complex biological system. And the multi-armed bandit structure enables the development of advanced decision-making strategies, allowing the model to adapt to the dynamic and evolving nature of biological systems in HCS experiments. Overall, while our framework may appear to have a simplified representation of cells through Markov processes, it ensures analytical tractability and effectively captures the core stochastic and heterogeneous properties of cellular dynamics in HCS environments.

It's worth noting why adversarial and contextual bandits are not incorporated into our framework, despite their unique strengths. Adversarial bandits, while robust to worst-case scenarios, assume that rewards can be arbitrarily controlled by an adversary. This assumption often leads to overly conservative strategies that fail to exploit the inherent structure and patterns present in biological systems. In our HCS scenario, the uncertainty and variability in cellular responses are governed by underlying biological mechanisms rather than adversarial manipulation. As a result, probabilistic models are better suited to capture the stochastic nature of cellular behavior, enabling us to leverage statistical patterns and prior knowledge about cellular dynamics. Moreover, contextual bandit models rely on the availability of external contextual information to influence decision-making. In the case of HCS, such contextual information would need to be sufficiently significant to impact the selection of wells. However, our framework considers the Markov processes to be the primary drivers of the cellular behavior in each region of interest. Since the state of the cells are fully observable only after imaging the well, external contextual information plays a limited role in our decision-making approach. By focusing on the stochastic and dynamic properties of the cellular systems, we ensure that our framework remains tailored to the specific challenges and requirements of HCS.

\subsection{Document organization}
The remainder of this manuscript is organized as follows. Section \ref{sec:markov_processes} establishes the mathematical framework for single and multiple independent Markov processes, analyzing their transient and asymptotic behaviors. Section \ref{sec:restless_MAB_framework} presents the RMAB formulation, derives the Whittle index policy, and discusses extensions to related problems. Section \ref{sec:simulation_results} evaluates our proposed policy through simulations against baselines such Round Robin, Bayesian UCB, and Thompson Sampling. Section \ref{sec:numerical_results} applies the policy to a biological dataset, demonstrating improved decision-making. Finally, Section \ref{sec:conclusions} summarizes our contributions and discusses broader implications for mathematical biology and computational decision-making.

\subsection{Notation}
Throughout this paper, sets are represented by calligraphic letters $\mathcal{S}$, and the cardinality (number of elements) of a finite set $\mathcal{S}$ is denoted by $|\mathcal{S}|$. Indexes are represented by lowercase letters $h$, samples (observations) of random variables are represented by uppercase letters $X$, and matrices are indicated by bold uppercase letters $\bm{X}$. Any temporal variable is defined with a time index, such as $X(h)$ for discrete time with $h = 0,1,2,\dots$, and $X(t)$ for continuous time with $t\geq 0$. For any one-dimensional matrix $\bm{X}$, the $i$-th element is denoted by $[\bm{X}]_{i}$, and for any two-dimensional matrix $\bm{Y}$, the $(i,j)$-th element is denoted by $[\bm{Y}]_{ij}$, with both indexes $i$ and $j$ being zero-based, i.e., $i,j\geq0$.

\section{Markov processes}
\label{sec:markov_processes}
This section establishes the mathematical framework for analyzing both single and multiple independent Markov processes. We begin with the transient and asymptotic characterization of a single discrete-time Markov process, which, while seemingly elementary, provides the indispensable foundation for the more general case of aggregated multi-process systems. Building on this foundation, we extend our analysis to multiple independent Markov processes, yielding exact closed-form expressions for transient probabilities, conditional expectations, and asymptotic distributions. The results and lemmas presented in this section reveal structural properties that form the basis for devising scalable decision-making strategies in restless multi-armed bandits.

\subsection{Single Markov Process}
\label{sec:singleMP}
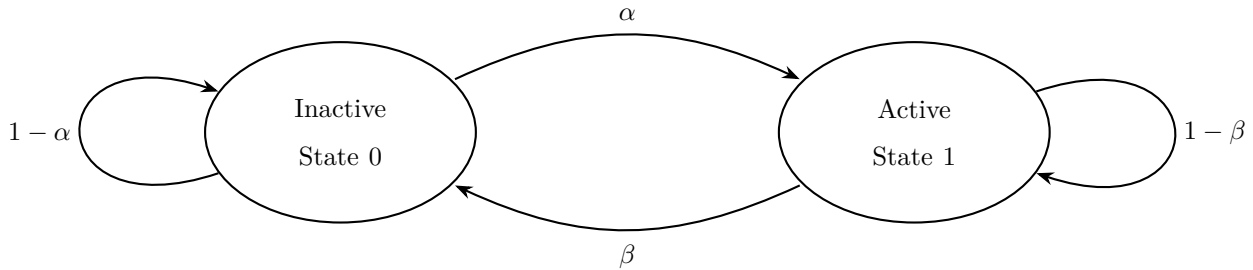
\begin{figure*}
    \centering
    \begin{tikzpicture}[
  >=Stealth, thick, font=\sffamily,
  node distance=40mm,           
  state/.style={draw,ellipse,minimum width=36mm,minimum height=24mm,align=center},
  xscale=.7
]
  \node[state] (S0) {Inactive\\State $0$};
  \node[state,right=of S0] (S1) {Active\\State $1$};

  \path[->] (S0) edge[loop left, min distance=10mm, looseness=10]
        node[left] {$1-\alpha$} (S0);
  \path[->] (S1) edge[loop right, min distance=10mm, looseness=10]
        node[right] {$1-\beta$} (S1);

  \draw[->,bend left=18,shorten >=1pt,shorten <=1pt]
        (S0) to node[above,yshift=2pt] {$\alpha$} (S1);
  \draw[->,bend left=18,shorten >=1pt,shorten <=1pt]
        (S1) to node[below,yshift=-2pt] {$\beta$} (S0);

\end{tikzpicture}
    \caption{State transition diagram of a discrete-time Markov process. The system alternates between an inactive state $0$ and an active state $1$ with activation probability $\alpha$ and deactivation probability $\beta$. Self-loops $(1-\alpha)$ and $(1-\beta)$ represent the probabilities of remaining in the same state at each time step.}
    \label{fig:single_MP}
\end{figure*}
Consider a discrete-time, ergodic, and time-homogeneous Markov process defined on a finite state space $\mathcal{S}=\{0,1\}$. At any given discrete time step $h\geq0$, the state of the Markov chain is represented by a binary random variable $X(h)$, with $X(h)=0$ indicating the inactive state, and $X(h)=1$ indicating the active state, as shown in Figure \ref{fig:single_MP}. Transitions between states are governed by $\alpha, \beta\in(0,1)$, where $\alpha$ represents the probability of transitioning from the inactive state to the active state, and $\beta$ represents the probability of transitioning from the active state to the inactive state. Together, these transition probabilities fully characterize the stochastic evolution of the process, and they are captured in the left-stochastic transition probability matrix
\begin{equation}
    \bm{P} = \begin{bmatrix}
    1-\alpha & \beta  \\
    \alpha & 1-\beta
    \end{bmatrix}.
    \label{eq:transProbMatrix}
\end{equation}

Given $\bm{P}$, the state probability vector $\bar{\bm{X}}(h) = [\mathcal{P}\big(X(h)=0\big),\mathcal{P}\big(X(h)=1\big)]^\mathsf{T}$ evolves as $\bar{\bm{X}}(h+1) = \bm{P}\,\bar{\bm{X}}(h)$, $\forall h\geq 0$. By iterating this recurrence,  the probability distribution at $m$ steps ahead of time $h$ can be expressed as
\begin{equation}
    \bar{\bm{X}}(h+m) = \bm{P}^m\,\bar{\bm{X}}(h), \qquad \forall h,m\geq 0.
    \label{eq:recurrenceRelation}
\end{equation}
If the initial distribution $\bar{\bm{X}}(0)$ is known, the probability distribution evolves as $\bar{\bm{X}}(m) = \bm{P}^m\,\bar{\bm{X}}(0)$, $\forall m\geq0$. Alternatively, an observation of the state $X(h)$ at time $h$ yields a degenerate distribution, encoded as a one-hot vector $\bm{X}(h)=[1-X(h),X(h)]^\mathsf{T}$, and the probability distribution at any future time step is $\bar{\bm{X}}(h+m) = \bm{P}^m\bm{X}(h)$, $\forall m\geq0$. 

Since $\bm{P}$ is diagonalizable, we can use the eigenvalue decomposition $\bm{P}=\bm{V}\bm{\Lambda}\bm{V}^{-1}$, where
\begin{equation}
    \bm{\Lambda} = \begin{bmatrix}
    1 & 0 \\
    0 & 1-(\alpha+\beta)
    \end{bmatrix}, \qquad \bm{V} = \begin{bmatrix}
    \beta & -1  \\
    \alpha & 1
    \end{bmatrix}, \qquad \bm{V}^{-1} = \frac{1}{\alpha+\beta}\begin{bmatrix}
    1 & 1 \\
    -\alpha & \beta
    \end{bmatrix},
\end{equation}
with the inverse matrix well-defined since $\alpha + \beta > 0$. Using this decomposition, the $m$-step transition probability matrix is
\begin{equation}
    \bm{P}^m = \bm{V}\bm{\Lambda}^m\bm{V}^{-1}
    = \frac{1}{\alpha+\beta}\begin{bmatrix}
    \beta + \alpha (1-\alpha-\beta)^m  & \beta - \beta (1-\alpha-\beta)^m \\
    \alpha- \alpha (1-\alpha-\beta)^m  & \alpha + \beta (1-\alpha-\beta)^m 
    \end{bmatrix},
    \label{eq:transProbMatrix_multipleSteps}
\end{equation}
where
\begin{equation*}
    [\bm{P}^m]_{ij} = \mathcal{P}\big(X(h+m)=i\,|\,X(h)=j\big), \qquad \forall i,j\in\mathcal{S}, \quad \forall m\geq 0, \quad \forall h\geq 0.
\end{equation*}
The closed-form expression in \eqref{eq:transProbMatrix_multipleSteps} not only facilitates the efficient computation of multi-step transitions, but it also reveals the asymptotic behavior of the process. Since $\alpha$ and $\beta$ are defined in the range $(0,1)$, the second eigenvalue of $\bm{P}$ always satisfies $|1-\alpha-\beta|<1$, and the term $(1 - \alpha - \beta)^m\to 0$ as $m\to\infty$. Consequently, in the limit, the state probability vector converges to the stationary distribution $\bm{\pi}$, defined by
\begin{equation}
    \bm{\pi} = \lim_{m\rightarrow\infty} \bar{\bm{X}}(h+m) = \frac{1}{\alpha+\beta}\begin{bmatrix}
    \beta  & \beta \\
    \alpha & \alpha
    \end{bmatrix}
    \begin{bmatrix}
    \mathcal{P}\big(X(h)=0\big) \\
    \mathcal{P}\big(X(h)=1\big)
    \end{bmatrix} = \frac{1}{\alpha+\beta}\begin{bmatrix}
    \beta\\
    \alpha
    \end{bmatrix},
    \label{eq:stationayDist_singleChain}
\end{equation}
regardless of the initial state, as $\mathcal{P}\big(X(h)=0\big) +\mathcal{P}\big(X(h)=1\big) =1$.

With that, we have characterized the behavior of a single Markov process, including its transient dynamics and stationary distribution, and we are now ready to extend this framework to multiple Markov processes.


\subsection{Multiple Markov processes}
\label{sec:multiMP}
Consider $N$ independent, discrete-time, ergodic, and time-homogeneous Markov processes, each with transition probability matrix $\bm{P}$, as defined in \eqref{eq:transProbMatrix}. At any discrete time $h \geq 0$, assume we do not have access to the individual states of these processes, but instead, we observe the aggregate number of active processes, represented by the random variable $Y(h)\in\{0,1,\dots,N\}$. Our goal is to characterize the conditional probability distribution of $Y(h+m)$ over time and explore its long-term behavior, leveraging the mathematical framework established in the previous section.

Since each process evolves independently, the aggregate behavior of the system can be analyzed by considering the transitions of active and inactive processes separately. Given $Y(h) = j$, the number of active processes that remain active after $m$ steps is $\text{Binomial}(j, q_m)$ with
\begin{equation}
    q_m=[\bm{P}^m]_{11} = \frac{1}{\alpha+\beta}\Big[ \alpha + \beta(1-\alpha-\beta)^m \Big],
    \label{eq:transition_AA}
\end{equation}
and the number of inactive processes that become active after $m$ steps is $\text{Binomial}(N - j, p_m)$ with
\begin{equation}
    p_m=[\bm{P}^m]_{10} = \frac{1}{\alpha+\beta}\Big[ \alpha- \alpha (1-\alpha-\beta)^m \Big].
    \label{eq:transition_IA}
\end{equation}
Taken together, the conditional probability of $Y(h+m)$ given $Y(h)=j$ can be expressed as
\begin{equation}
\mathcal{P}\big(Y(h+m) = y \mid Y(h) = j\big) = \sum_{\ell = \max(0,y+j-N)}^{\min(y,j)} \binom{j}{\ell} q_m^{j-\ell} (1-q_m)^\ell \, \binom{N-j}{y-\ell} p_m^{y-\ell} (1-p_m)^{N+\ell-y-j},
\label{eq:transProbMatrix_multipleProcesses}
\end{equation}
for any integers $m\geq0$, $h\geq0$, and $y,j\in\{0,1,\dots,N\}$, where $q_m$ and $p_m$ are defined in \eqref{eq:transition_AA} and \eqref{eq:transition_IA}, respectively. The summation bounds $\max(0,y+j-N)$ and $\min(y,j)$ ensure that both binomial coefficients remain valid.

Building upon the closed-form expressions of the transient probabilities in equation \eqref{eq:transProbMatrix_multipleProcesses}, we can derive additional statistics that provide deeper insights into the short-term and long-term behavior of the system. One important measure is the conditional expectation of the number of active processes. Since the total number of active processes is the sum of two independent Binomial random variables, one representing the number of initially active processes that remain active, and the other representing the number of initially inactive processes that become active, we can use the linearity of expectation to find that
\begin{equation}
    \mathbb{E}\big[Y(h+m)  \mid Y(h) = j\big] = N \frac{\alpha}{\alpha+\beta} + \left(1-\alpha-\beta\right)^m \left(j-N \frac{\alpha}{\alpha+\beta}\right).
    \label{eq:expectedActive_multiProcess}
\end{equation}

Another important measure to form the basis of our decision-making framework is the rate of convergence of the entire system. Notice that the term $(1-\alpha-\beta)^m$ in both the expectation and the expressions for $p_m$ and $q_m$ dictates the decay of the transient effects. Since $|1-\alpha-\beta|<1$, the convergence is geometric, with the rate determined by $\alpha+\beta$. Large values of $\alpha+\beta$ lead to a faster decay of the transient term, while smaller values prolong the influence of the initial state $Y(h)=j$. In the asymptotic regime, this transient term vanishes, and both the expectation and the probability distribution converge to a well-defined limit. In particular, as the number of steps $m$ approaches infinity, the transition probabilities $q_m$ and $p_m$ converge to the steady-state probability of being active $\lim_{m\to\infty} q_m = \lim_{m\to\infty} p_m = \alpha/(\alpha+\beta)$, and the conditional probability in equation \eqref{eq:transProbMatrix_multipleProcesses} reduces to
\begin{equation}
    \lim_{m\to\infty} \mathcal{P}\big(Y(h+m) = y \mid Y(h) = j\big) = \binom{N}{y} \pi^y (1-\pi)^{N-y}, \qquad \text{with} \qquad \pi = [\bm{\pi}]_1=\frac{\alpha}{\alpha+\beta},
    \label{eq:asympProbMatrix_multipleProcesses}
\end{equation}
irrespective of the initial number of active processes $Y(h)=j$. This asymptotic behavior also aligns with the limiting value of the expected number of active processes,  $\lim_{m\to\infty} \mathbb{E}\big[Y(h+m) \mid Y(h) = j\big] = N\alpha/(\alpha+\beta)$, which corresponds to the mean of the limiting Binomial distribution presented in \eqref{eq:asympProbMatrix_multipleProcesses}.

In summary, the transient probabilities provide a detailed description of the system's evolution over a finite number of steps, whereas the asymptotic analysis confirms convergence to a well-defined stationary distribution. These transient and asymptotic results serve as a building block for the more intricate spatio-temporal analysis in the next section, where we extend these results to limiting regimes with infinite Markov processes and continuous time.

\subsection{Spatio-temporal analysis}
For an increasing number of processes $N$, calculating the transient behavior of independent Markov processes via equation~\eqref{eq:transProbMatrix_multipleProcesses} becomes computationally intensive, limiting its practical applicability in large-scale settings. To overcome this limitation, we derive asymptotic approximations for the conditional distribution of $Y(h+m)$ as $N$ approaches infinity. We specifically examine this problem in two settings, namely, a discrete-time model with vanishing activation probabilities, and a continuous-time model with infinitesimal transition rates, as illustrated schematically in Figure~\ref{fig:multiple_MP}.
\begin{figure*}
    \centering
    \begin{tikzpicture}[
  >=Stealth, thick, font=\sffamily,
  x=1mm,y=1mm,
  state0blue/.style={draw=blue!15!white,fill=blue!20,circle,minimum size=9.5mm,align=center},
  state0neutral/.style={draw=black!70!white,circle,minimum size=9.5mm,align=center},
  state1neutral/.style={draw=black!70!white,circle,minimum size=9.5mm,align=center},
  state1active/.style={draw=red!20!white,fill=red!25,circle,minimum size=9.5mm,align=center}
]

\def\nRows{3}
\def\nCols{3}
\def\dx{24}        
\def\rowdy{15}     
\def\colsep{55}    
\def\bracegap{13}  
\def\labeldrop{8} 
\def\boxdrop{3} 
\def\boxshift{44}  

\newcommand{\redcirc}{\tikz[baseline=-0.6ex]{\filldraw[draw=red!20!white,fill=red!25] (0,0) circle (4.3pt);}}

\foreach \r in {1,...,\nRows} {
  \pgfmathsetmacro{\y}{-(\r-1)*\rowdy}
  \foreach \c in {1,...,\nCols} {
    \pgfmathsetmacro{\x}{(\c-1)*\colsep}
    \pgfmathtruncatemacro{\isActive}{int((\c==1 && \r==1) || (\c==3 && \r==1) || (\c==2 && \r==2) || (\c==3 && \r==3))}
    \ifnum\isActive=1
      \node[state0neutral] (S0-\c-\r) at (\x,\y) {0};
      \node[state1active]  (S1-\c-\r) at (\x+\dx,\y) {1};
    \else
      \node[state0blue]    (S0-\c-\r) at (\x,\y) {0};
      \node[state1neutral] (S1-\c-\r) at (\x+\dx,\y) {1};
    \fi
    \path[->] (S0-\c-\r) edge[loop left,  min distance=7.5mm, looseness=10] (S0-\c-\r);
    \path[->] (S1-\c-\r) edge[loop right, min distance=7.5mm, looseness=10] (S1-\c-\r);
    \draw[->,bend left=22] (S0-\c-\r) to (S1-\c-\r);
    \draw[->,bend left=22] (S1-\c-\r) to (S0-\c-\r);
  }
}

\path let
  \p1 = (S0-1-\nRows.west),       
  \p2 = (S1-\nCols-\nRows.east),  
  \p3 = (S0-1-\nRows.south)       
in
  coordinate (braceL)   at (\x1-20, \y3-\bracegap)
  coordinate (braceR)   at (\x2+20, \y3-\bracegap)
  coordinate (braceMid) at ($(braceL)!0.5!(braceR)$);

\draw[decorate, decoration={mirror, brace, amplitude=8pt,raise=1.5pt}]
  (braceL) -- (braceR);

\node[anchor=north,align=center] (ylabel) at ($(braceMid)+(0,-\labeldrop)$) {%
  $\displaystyle Y(h)\;=\;\sum\;\redcirc$\\[5pt]
  \scriptsize number of active processes
};

\node[anchor=north, draw, rounded corners=2pt, fill=white,
      align=center, font=\normalsize, inner sep=6pt]
  at ($(ylabel.north)+(-\boxshift,\boxdrop)$) {%
  \textbf{Lemma \ref{lemma:convergence_nrof_chains}}\\
  $\alpha=\bar{\alpha}/N,\ \bar{\alpha}>0$\\
  $\beta\in(0,1)$\\
  $N\rightarrow\infty$
};

\node[anchor=north, draw, rounded corners=2pt, fill=white,
      align=center, font=\normalsize, inner sep=6pt]
  at ($(ylabel.north)+(\boxshift,\boxdrop)$) {%
  \textbf{Lemma \ref{lemma:convergence_nrof_chains_CTMC}}\\
  $\alpha=\bar{\alpha}\Delta_t/N,\ \bar{\alpha}>0$\\
  $\beta=\bar{\beta}\Delta_t,\ \bar{\beta}>0$\\
  $N\to\infty,\ \Delta_t\to0$
};

\end{tikzpicture}
    \caption{Schematic representation of the ensemble of binary-state Markov processes considered in Lemmas~\ref{lemma:convergence_nrof_chains} and~\ref{lemma:convergence_nrof_chains_CTMC}. Each process alternates between an inactive state ($0$, blue) and an active state ($1$, red), with transition probabilities illustrated in Figure~\ref{fig:single_MP}. The right brace indicates the aggregate variable $Y(h)$, corresponding to the total number of active processes at time $h$. Lemma~\ref{lemma:convergence_nrof_chains} describes the limiting discrete-time case as $N \to \infty$, whereas Lemma~\ref{lemma:convergence_nrof_chains_CTMC} extends the result to the continuous-time limit as $N \to \infty$ and $\Delta_t \to 0$.}
    \label{fig:multiple_MP}
\end{figure*}
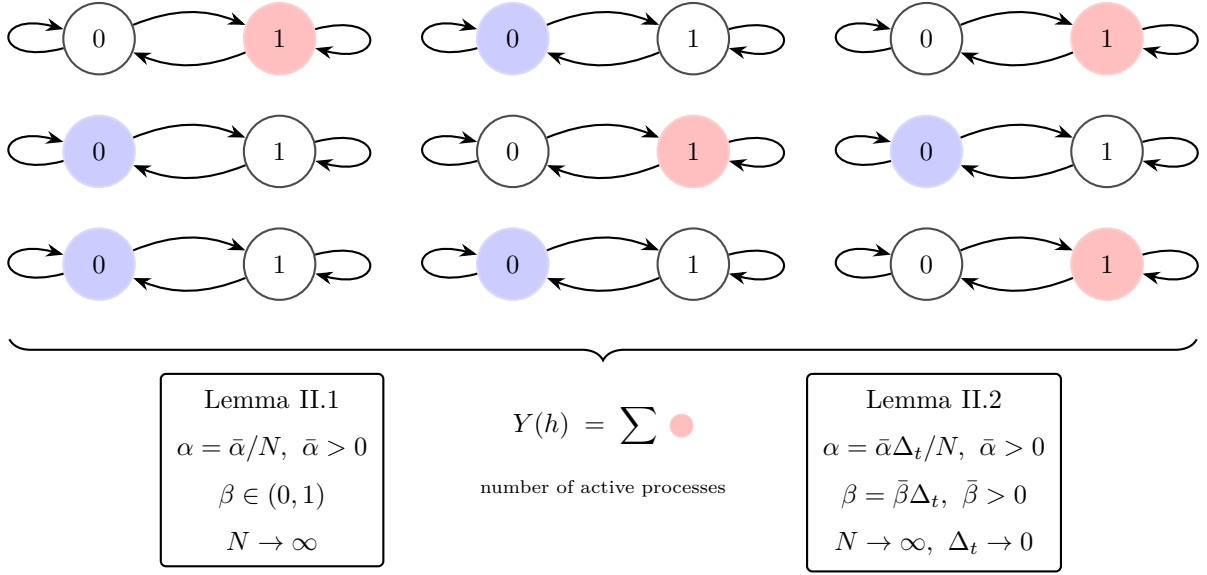

\begin{lemma}[Statistical properties of infinitely many independent discrete-time Markov processes]
\label{lemma:convergence_nrof_chains}
Consider $N$ independent, discrete-time Markov processes, each defined on the state space $\mathcal{S}\in\{0,1\}$ with transition probability matrix
\begin{equation*}
    \bm{P} = \begin{bmatrix}
    1-\alpha & \beta  \\
    \alpha & 1-\beta
    \end{bmatrix}.
\end{equation*}
Define the scaling $\alpha=\bar{\alpha}/N$, with $\bar{\alpha}>0$ and assume $\beta$ to be a fixed constant in $(0,1)$. Let $Y(h)$ denote the total number of processes in state $1$ at time $h$, taking values in $\{0,1,\dots,N\}$. For any integers $m\geq0$ and $j\geq0$, as $N\to\infty$, the conditional distribution of $Y(h+m)$ given $Y(h)=j$ converges in distribution to that of a random variable 
$Z+W$ where
\begin{itemize}
    \item $Z \sim \text{Binomial}\big(j,(1-\beta)^m\big)$,
    \item $W \sim \text{Poisson}\left(\frac{\bar{\alpha}}{\beta}\left(1-(1-\beta)^m\right)\right)$,
    \item $Z$ and $W$ are independent.
\end{itemize}

Moreover, its expected value approaches
\begin{equation}
    \mathbb{E}\left[Y(h+m) \mid Y(h) = j\right] \xrightarrow{N\to\infty} \frac{\bar{\alpha}}{\beta}\Big(1-(1-\beta)^m\Big) + j(1-\beta)^m.
    \label{eq:expectedActive_multiProcess_limitDiscrete}
\end{equation}
\end{lemma}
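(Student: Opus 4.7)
The plan is to reduce the statement to limit theorems for the binomial distribution by exploiting the decomposition of $Y(h+m)$ already derived in Section \ref{sec:multiMP}. Conditional on $Y(h)=j$, the count $Y(h+m)$ splits into two \emph{independent} contributions: the number $Z_N$ of processes initially active that are still active $m$ steps later, distributed as $\text{Binomial}(j,q_m)$, and the number $W_N$ of processes initially inactive that have become active, distributed as $\text{Binomial}(N-j,p_m)$, with $q_m$ and $p_m$ given by \eqref{eq:transition_AA} and \eqref{eq:transition_IA}. I would therefore prove convergence of $Z_N$ and $W_N$ separately, then combine them by independence.

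First, I would substitute $\alpha=\bar{\alpha}/N$ into the expressions for $q_m$ and $p_m$ and take $N\to\infty$. A direct expansion gives
\begin{equation*}
q_m \;=\; \frac{\bar{\alpha}/N+\beta(1-\bar{\alpha}/N-\beta)^m}{\bar{\alpha}/N+\beta} \;\longrightarrow\; (1-\beta)^m,
\end{equation*}
and, using $(1-\bar{\alpha}/N-\beta)^m = (1-\beta)^m + O(1/N)$,
\begin{equation*}
(N-j)\,p_m \;=\; \frac{(N-j)\,\bar{\alpha}/N}{\bar{\alpha}/N+\beta}\bigl(1-(1-\bar{\alpha}/N-\beta)^m\bigr) \;\longrightarrow\; \frac{\bar{\alpha}}{\beta}\bigl(1-(1-\beta)^m\bigr)=:\lambda.
\end{equation*}
Since $j$ is held fixed and $q_m\to(1-\beta)^m$, the binomial $Z_N$ converges in distribution to $Z\sim\text{Binomial}(j,(1-\beta)^m)$; this can be verified trivially pointwise from the binomial pmf since $j$ is finite. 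For $W_N$, the triangular array of Bernoulli trials satisfies $p_m\to 0$ and $(N-j)p_m\to\lambda$, so the classical Poisson limit theorem (Le Cam's theorem, or direct computation of the probability generating function $(1-p_m+p_m z)^{N-j}\to e^{\lambda(z-1)}$) yields $W_N\xrightarrow{d}W\sim\text{Poisson}(\lambda)$.

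Because $Z_N$ and $W_N$ are independent for every $N$, the pair $(Z_N,W_N)$ converges jointly to $(Z,W)$ with $Z\perp W$ (independence is preserved in the limit via the factorization of probability generating functions). The continuous mapping theorem applied to $(x,y)\mapsto x+y$ then gives $Y(h+m)=Z_N+W_N\xrightarrow{d}Z+W$, which is precisely the claimed distributional limit.

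For the expectation, I would use the linearity $\mathbb{E}[Y(h+m)\mid Y(h)=j]=j\,q_m+(N-j)\,p_m$, which was already noted in \eqref{eq:expectedActive_multiProcess}, and take $N\to\infty$ termwise using the limits of $q_m$ and $(N-j)p_m$ computed above; this yields \eqref{eq:expectedActive_multiProcess_limitDiscrete} immediately. Equivalently, since $Z+W$ has bounded-variance Poisson-plus-binomial marginals and the sequence $Y(h+m)$ is uniformly integrable (both binomial factors have means bounded in $N$), convergence of means follows from distributional convergence. The main technical point is simply verifying that the Poisson rate is computed with the correct scaling; the rest is bookkeeping and a standard triangular-array limit argument.
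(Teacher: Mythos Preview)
Your proposal is correct and follows essentially the same approach as the paper: decompose $Y(h+m)$ into the two independent binomial contributions $Z_N\sim\text{Binomial}(j,q_m)$ and $W_N\sim\text{Binomial}(N-j,p_m)$, expand $q_m$ and $p_m$ under the scaling $\alpha=\bar\alpha/N$, and then invoke Le~Cam's theorem for the Poisson limit of $W_N$ while $Z_N$ converges trivially since $j$ is fixed. The paper's argument is virtually identical, differing only in cosmetic details (it writes out the $1/N$ expansions a bit more explicitly and does not name the continuous mapping theorem or uniform integrability, but the substance is the same).
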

\begin{proof}
    See Appendix \ref{app:proof-TranProb_infiniteProcesses}.
\end{proof}


\begin{lemma}[Statistical properties of infinitely many independent continuous-time Markov processes]
\label{lemma:convergence_nrof_chains_CTMC}
Consider $N$ independent, discrete-time Markov processes, each defined on the state space $\mathcal{S}\in\{0,1\}$ with transition probability matrix
\begin{equation*}
    \bm{P} = \begin{bmatrix}
    1-\alpha & \beta  \\
    \alpha & 1-\beta
    \end{bmatrix}.
\end{equation*}
Define the scaling $\alpha=\bar{\alpha}\Delta_t/N$ and $\beta=\bar{\beta}\Delta_t$ over an infinitessimal time period $\Delta_t$, and assume $\bar{\alpha}$ and $\bar{\beta}$ to be fixed positive constants. Let $Y(t)$ denote the total number of processes in state $1$ at time $t$. For any time $\tau \geq 0$ and integer $j\geq0$, as $N\to\infty$ and $\Delta_t\to 0$, the conditional distribution of $Y(t+\tau)$ given $Y(t)=j$ converges in distribution to that of a random variable 
$Z+W$ where
\begin{itemize}
    \item $Z \sim \text{Binomial}\left(j,e^{-\bar{\beta}\tau}\right)$,
    \item $W \sim \text{Poisson}\left(\frac{\bar{\alpha}}{\beta}\left(1-e^{-\bar{\beta}\tau}\right)\right)$,
    \item $Z$ and $W$ are independent.
\end{itemize}

Moreover, its expected value approaches
\begin{equation}
    \mathbb{E}\left[Y(t+\tau) \mid Y(t) = j\right] \xrightarrow{N\to\infty, \; \Delta_t\to 0} \frac{\bar{\alpha}}{\bar{\beta}}\Big(1-e^{-\bar{\beta}\tau}\Big) + je^{-\bar{\beta}\tau}.
    \label{eq:expectedActive_multiProcess_limitContinous}
\end{equation}
\end{lemma}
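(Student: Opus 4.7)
The plan is to reduce Lemma~\ref{lemma:convergence_nrof_chains_CTMC} to the discrete-time result of Lemma~\ref{lemma:convergence_nrof_chains} by embedding the continuous-time evolution into discrete chains with a vanishing time step $\Delta_t$. I would set $m = \lfloor \tau/\Delta_t \rfloor$ so that the discrete transition probabilities $\alpha = \bar{\alpha}\Delta_t/N$ and $\beta = \bar{\beta}\Delta_t$ correspond to a Markov step of duration $\Delta_t$, and $m$ steps approximate the continuous interval $\tau$. The exact conditional law of $Y(t+\tau) \mid Y(t)=j$ then coincides with the discrete-time distribution in \eqref{eq:transProbMatrix_multipleProcesses} under these scalings, and the task reduces to studying its joint limit as $N\to\infty$ and $\Delta_t\to 0$.

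Next, I would analyze the two independent Binomial pieces identified in Section~\ref{sec:multiMP}, namely $\mathrm{Binomial}(j, q_m)$ for active processes that remain active and $\mathrm{Binomial}(N-j, p_m)$ for newly activated ones, using the closed forms in \eqref{eq:transition_AA} and \eqref{eq:transition_IA}. A short computation based on the standard limit $(1 - \bar{\beta}\Delta_t - \bar{\alpha}\Delta_t/N)^{\tau/\Delta_t} \to e^{-\bar{\beta}\tau}$ yields $q_m \to e^{-\bar{\beta}\tau}$ and $(N-j)\,p_m \to \frac{\bar{\alpha}}{\bar{\beta}}(1 - e^{-\bar{\beta}\tau})$. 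Continuity of the Binomial distribution in its success probability then gives convergence to $Z \sim \mathrm{Binomial}(j, e^{-\bar{\beta}\tau})$, while the Poisson limit theorem applied to $\mathrm{Binomial}(N-j, p_m)$ with bounded mean yields convergence to $W \sim \mathrm{Poisson}\bigl(\tfrac{\bar{\alpha}}{\bar{\beta}}(1-e^{-\bar{\beta}\tau})\bigr)$. Independence of $Z$ and $W$ is inherited directly from the independence of the $N$ underlying processes.

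To handle both limits simultaneously and rigorously, I would work at the level of probability-generating functions, where the conditional PGF factorizes as
\begin{equation*}
\mathbb{E}\bigl[z^{Y(t+\tau)} \mid Y(t)=j\bigr] = \bigl(1 - q_m + q_m z\bigr)^{j} \bigl(1 - p_m + p_m z\bigr)^{N-j}.
\end{equation*}
For fixed $z$, the first factor converges to the Binomial PGF of $Z$, and for the second factor I would write $(1 - p_m(1-z))^{N-j} = \exp\bigl((N-j)\log(1 - p_m(1-z))\bigr)$, expand the logarithm using $p_m\to 0$, and substitute the limit of $(N-j)p_m$ to recover the Poisson PGF of $W$. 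By L\'{e}vy's continuity theorem, this pointwise PGF convergence delivers convergence in distribution to $Z+W$, and the expectation formula \eqref{eq:expectedActive_multiProcess_limitContinous} follows by linearity (or equivalently by substituting the scalings directly into \eqref{eq:expectedActive_multiProcess}). The main obstacle is precisely the justification of this joint $(N,\Delta_t)$ limit: iterated limits would require delicate uniformity arguments, whereas the PGF route collapses both limits into a single pointwise convergence of analytic functions. A minor technicality, the replacement of $m=\tau/\Delta_t$ by $\lfloor \tau/\Delta_t\rfloor$ when $\tau/\Delta_t$ is non-integer, contributes only an $O(\Delta_t)$ correction to all quantities and vanishes in the limit.
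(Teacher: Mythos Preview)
Your proposal is correct and shares the paper's core machinery: the decomposition of $Y(t+\tau)\mid Y(t)=j$ into two independent Binomial components, the probability-generating-function factorization, the logarithmic expansion of the second factor to extract the Poisson limit, and L\'evy's continuity theorem to conclude convergence in distribution. The one organizational difference is how the $\Delta_t\to 0$ limit is handled. The paper first passes from the discrete chain to a continuous-time Markov process via the generator matrix $\bm{Q}$ and the matrix exponential $\bm{B}(\tau)=e^{\bm{Q}\tau}$, obtaining closed-form continuous-time transition probabilities $q_\tau=[\bm{B}(\tau)]_{11}$ and $p_\tau=[\bm{B}(\tau)]_{10}$ (still depending on $N$), and only then takes the single limit $N\to\infty$ inside the PGF. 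You instead remain in discrete time with $m=\lfloor\tau/\Delta_t\rfloor$ steps and take the joint $(N,\Delta_t)$ limit directly in the discrete formulas \eqref{eq:transition_AA}--\eqref{eq:transition_IA}. Your route is slightly more elementary (no generator or Chapman--Kolmogorov machinery) and makes explicit the reduction to Lemma~\ref{lemma:convergence_nrof_chains}, at the cost of the extra bookkeeping you flag for the joint limit and the floor correction; the paper's route cleanly decouples the two limits by absorbing $\Delta_t\to 0$ into the CTMC formalism up front, leaving a single $N\to\infty$ argument that exactly parallels Appendix~\ref{app:proof-TranProb_infiniteProcesses}.
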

\begin{proof}
    See Appendix \ref{app:proof-TranProb_CTMC}.
\end{proof}

The results from Lemmas \ref{lemma:convergence_nrof_chains} and \ref{lemma:convergence_nrof_chains_CTMC} provide a unified analytical framework for modeling the transient behavior of large ensembles of independent Markov processes in both discrete- and continuous-time settings. In the discrete-time case, Lemma~\ref{lemma:convergence_nrof_chains} quantifies how the number of active processes evolves over $m$ steps as $N$ becomes large, with the Binomial component reflecting the persistence of the initial state, and the Poisson component reflecting the accumulation of new active processes. Extending this to the continuous-time setting, Lemma~\ref{lemma:convergence_nrof_chains_CTMC} adapts the analysis to dynamics governed by differential transition rates, yielding a limiting distribution that combines an exponential memory decay for initially active processes with Poissonian arrivals of new activations, matching the intuition from the discrete-time result but adapted to a continuous-time framework. Taken together, these limiting distributions enable compact representations of the joint dynamics across arms in RMAB, and they facilitate the design of effective and scalable decision-making, as explored next.

\section{Restless Multi-Armed Bandit Framework}
\label{sec:restless_MAB_framework}

Consider a RMAB problem with a decision-making entity facing a finite set of arms $\mathcal{K} := \{1, 2, \ldots, K\}$. Each arm $k \in \mathcal{K}$ consists of $N_k$ independent, time-homogeneous, and ergodic Markov processes, each evolving on a common binary state space $\mathcal{S} = \{0, 1\}$, where state $0$ represents the inactive state and state $1$ the active state. The dynamics of each process within arm $k$ are governed by the transition probability matrix $\bm{P}_k$ as in \eqref{eq:transProbMatrix} with (assumed known) parameters $\alpha_k, \beta_k\in(0,1)$.

Within this setup, the decision entity is allowed to select only one arm at a time based on a deterministic or random policy. In particular, let the arm selected at time $h$ be denoted $A(h)\in\mathcal{K}$. Upon selecting arm $A(h)=a$, the decision entity observes $Y_a(h)$, corresponding to the aggregate number of active processes within that arm. Formally,
\begin{equation}
    Y_k(h) = \sum_{n\in\mathcal{N}_k}[\bm{Z}_k(h)]_{1n}, \quad \forall k\in\mathcal{K}, \, \forall h\geq 0.
    \label{eq:regret}
\end{equation}
where $\mathcal{N}_k\in\{0,1,\dots, N_k\}$ is the set of processes in arm $k$, and $\bm{Z}_k(h)\in\{0,1\}^{|\mathcal{S}|\times N_k}$ is the latent state of arm $k$ at time $h$, whose column $n\in\mathcal{N}_k$ is a one-hot encoded vector describing the state of process $n$ within arm $k$. This construction ensures that $\bm{Z}_k(h)$ satisfies $\sum_{i\in\mathcal{S}} [\bm{Z}_k(h)]_{in} = 1$, $\forall n\in\mathcal{N}_k$, $ \forall k\in\mathcal{K}$, and $\forall h\geq 0$, indicating that each individual Markov process can only be at exactly one sate at any given time.

Considering this, the objective of the decision entity is to design a function policy $\phi:\mathcal{B}\to \mathcal{K}$ that maps the belief states $\mathcal{B}=\prod_{k=1}^K \{0,\dots,N_k\}\times \mathbb{N}_0$ to arm selections and maximizes the average discounted reward over an infinite time horizon
\begin{equation}
    \mathcal{J}_\gamma(\phi) = \sum_{h=0}^{\infty}\gamma^{h}\; \mathbb{E}_\phi\left[Y_{A(h)}(h)\right], \qquad 0<\gamma<1,
\end{equation}
where $\gamma$ represents the discount factor reflecting the present value of future rewards, and where the expectation is taken with respect to the initial state
distribution and the joint distribution induced on all system variables.

One of the key challenges and a distinguishing feature of the RMAB setting is that all processes evolve regardless of selection. That is, the state transitions of all chains within arm $k$ occur according to their transition probability matrices $\bm{P}_k$, regardless of whether an arm is actively selected by the decision entity at time $h$. This partial observability and restless evolution only allows the decision entity to make decisions relying on past observations and known transition dynamics. In that regard, classical RMAB settings let the decision entity construct a belief state $\bar{\bm{Z}}_k(h)$ for all arms as
\begin{equation}
    \bar{\bm{Z}}_k(h) = \begin{cases}
        \bm{Z}_k(h) & A(h) = k, \\
        \bm{P}_k \, \bar{\bm{Z}}_k(h-1) & A(h)\neq k,
    \end{cases}
    \label{eq:belief-update-matrix}
\end{equation}
which exploits the independence of the chains and the Markov property by propagating the belief forward with $\bm{P}_k$ whenever arm $k$ is not selected. However, maintaining the $|\mathcal{S}|\times N_k$ belief matrices for all $k\in\mathcal{K}$ is computationally prohibitive, motivating the search for a compact representation of arm states that leverages the full stochastic structure of the problem. 

Building upon the analytical results from Section~\ref{sec:markov_processes}, we identify the tuple $(j_k, m_k)_{k\in\mathcal{K}}$ as a minimal sufficient statistic for arm $k$, where $j_k \in \{0, 1, \ldots, N_k\}$ is the number of active processes when arm $k$ was last observed, and $m_k \in\mathbb{N}_0$ is the delay since that last observation. These minimal sufficient statistics summarize the arm states, allowing the global state space to reduce from exponential to linear complexity in $K$. In such a setting, the belief state for each arm $k \in \mathcal{K}$ is updated from its current value $(j_k, m_k)$ to the next value $(j_k', m_k')$ as follows
\begin{equation}
    (j_k',\, m_k') \leftarrow \begin{cases}
        (Y_k(h),\, 0) & A(h) = k, \\
        (j_k,\, m_k+1) & A(h)\neq k.
    \end{cases}
    \label{eq:belief-update}
\end{equation}

Adopting $(j_k,m_k)_{k\in\mathcal{K}}$ as a sufficient statistic addresses the partial observability and restless nature of the RMAB problem, yet the pursuit of an optimal policy through exact dynamic programming remains computationally prohibitive. Indeed, the RMAB problem is known to be PSPACE-hard, even when the state matrix $\bm{Z}(h)$ is fully observable~\cite{papadimitriou1999complexity}. This computational intractability has motivated the development of heuristic approaches that balance optimality and scalability~\cite{whittle1988restless, weber1990index}. Among the various heuristic strategies proposed in the literature, index policies have emerged as a particularly promising approach due to their ability to decompose the complex multi-armed optimization into a collection of single-armed subproblems. The fundamental insight underlying index policies is that each arm can be assigned a scalar index that captures its relative priority for being selected, thereby reducing the combinatorial selection problem to a simple ranking and selection procedure. The Whittle index, introduced in~\cite{whittle1988restless}, represents the most principled instantiation of this approach, and there is strong evidence to suggest that it provides near-optimal performance with theoretical guarantees in many settings~\cite{glazebrook2006some,akbarzadeh2022conditions,wang2019whittle}.


The Whittle index is derived through a Lagrangian relaxation of the original optimization problem. Instead of enforcing that exactly one arm must be selected at each time, we consider a relaxed problem where each arm receives a subsidy $\lambda$ for remaining unobserved. This relaxation transforms the constrained optimization problem into $K$ decoupled single-arm Markov decision problems, each seeking to balance the immediate reward of being selected with the subsidy costs of remaining unobserved~\cite{bertsimas2000restless}. Importantly, as established in~\cite{whittle1988restless}, the Whittle index policy is meaningful only when all arms are indexable, a condition we verify for each arm $k \in \mathcal{K}$ in Appendix~\ref{app:proof-Whittle-Index}. Under these indexability conditions, the Whittle index, denoted by $\mathcal{W}(j_k, m_k; \gamma)$, is well-defined and applicable in our RMAB formulation, and it represents the smallest subsidy $\lambda$ for which the decision entity is indifferent between selecting an arm or leaving it unobserved. Theorem \ref{theo:whittle_index} formalizes this derivation.


\begin{theorem}[Whittle index for independent Markov processes]
\label{theo:whittle_index}
Consider a restless multi-armed bandit problem with a finite set of arms $\mathcal{K}=\{1,2,\dots,K\}$, each consisting of $N_k$ independent, discrete-time, ergodic, and time-homogeneous Markov processes. Let $Y_k(h)$ denote an observable random variable at discrete time $h$, representing a statistic of interest associated with arm $k$. Let the belief state associated with arm $k\in\mathcal{K}$ be denoted by the tuple $(j_k,m_k)$ where $j_k$ is the most recent observation of arm $k$, and $m_k$ is the number of time steps since that observation. Suppose the conditional expectation of $Y_k(h)$, given that arm $k$ was last observed $m_k$ steps ago with value $j_k$, satisfies the affine form
\begin{equation}
    \mathbb{E}[Y_k(h) \mid Y_k(h-m_k)=j_k] = C(m_k) + j_k \, D(m_k),
\end{equation}
for some functions $C:\mathbb{N}_0\to\mathbb{R}$ and $D:\mathbb{N}_0\to\mathbb{R}$. Then, for any discount factor $\gamma\in(0,1)$, the Whittle index associated with arm $k$ in state $(j_k,m_k)$ is given by
\begin{equation}
    \mathcal{W}(j_k,m_k;\gamma) = \frac{1}{1-\gamma D(0)}\Big[ C(m_k) - \gamma C(m_k+1) +\gamma C(0)\Big] + \frac{j_k}{1-\gamma D(0)}\Big[D(m_k)-\gamma D(m_k+1)\Big].
    \label{eq:general_WhittleIndex}
\end{equation}
\begin{proof}
    See Appendix \ref{app:proof-Whittle-Index}.
\end{proof}

\end{theorem}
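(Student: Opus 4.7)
My plan is to exploit the affine hypothesis to obtain a closed-form value function for the single-arm Lagrangian subproblem, and then to solve the indifference equation defining the Whittle index. For a single arm in belief state $(j,m)$ with passive subsidy $\lambda$, the Bellman equations for the active and passive actions read
\begin{align*}
V^{a}(j,m;\lambda) &= C(m) + j D(m) + \gamma\,\mathbb{E}\bigl[V(Y,0;\lambda)\,\big|\,j,m\bigr],\\
V^{p}(j,m;\lambda) &= \lambda + \gamma\, V(j,m+1;\lambda),
\end{align*}
with $V = \max\{V^{a},V^{p}\}$. By definition, $\mathcal{W}(j,m;\gamma)$ is the subsidy level at which $V^{a}(j,m;\lambda)=V^{p}(j,m;\lambda)$, i.e., the decision entity is indifferent at $(j,m)$.

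First, I would postulate the affine ansatz $V(j,m;\lambda) = A(m;\lambda) + j\,B(m;\lambda)$, motivated by the hypothesis $\mathbb{E}[Y\mid j,m] = C(m) + j D(m)$. Substituting into $V^{a}$ and applying linearity of expectation gives $V^{a}(j,m;\lambda) = (1+\gamma B(0))[C(m)+jD(m)] + \gamma A(0)$, so matching the affine ansatz coefficient-wise yields the structural recursions $B(m) = (1+\gamma B(0))\,D(m)$ and $A(m) = (1+\gamma B(0))\,C(m) + \gamma A(0)$. Evaluating at $m=0$ produces the closed-form fixed points $B(0) = D(0)/(1-\gamma D(0))$ and $A(0) = C(0)/[(1-\gamma)(1-\gamma D(0))]$, together with the convenient identity $1+\gamma B(0) = 1/(1-\gamma D(0))$, so that $A(m)$ and $B(m)$ become explicit for every $m$.

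Second, I would plug these coefficients into the indifference condition $V^{a}(j,m;\lambda) = \lambda + \gamma[A(m+1) + j B(m+1)]$ and solve for $\lambda$. After cancelling the common $(1-\gamma)$ denominators, the $j$-free part collapses to $[C(m) - \gamma C(m+1) + \gamma C(0)]/(1-\gamma D(0))$ and the coefficient of $j$ collapses to $[D(m) - \gamma D(m+1)]/(1-\gamma D(0))$, reproducing exactly the two terms in the claimed identity~\eqref{eq:general_WhittleIndex}.

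The main obstacle is justifying that the affine ansatz coincides with the true optimal value function in the relevant region of state space, equivalently that the ``always-play'' continuation implicit in the computation above is indeed optimal at the indifference level. Concretely, this amounts to proving indexability of the single-arm subproblem, namely that the passive set grows monotonically with $\lambda$, together with the structural property that at $\lambda=\mathcal{W}(j,m;\gamma)$ the post-observation states $(Y,0)$ all lie in the active region so that $V(Y,0;\lambda) = A(0) + Y B(0)$ holds. Establishing these threshold and monotonicity properties relies on the geometric decay of $D(m)$ and the monotone convergence of $C(m)$ induced by the ergodic Markov dynamics developed in Section~\ref{sec:markov_processes}, and is the technical content deferred to Appendix~\ref{app:proof-Whittle-Index}; once it is in place, the algebraic derivation above closes the argument.
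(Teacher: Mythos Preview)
Your proposal is correct and follows essentially the same route as the paper's proof: both set up the active/passive Bellman equations, postulate the affine ansatz $V(j,m)=A(m)+jB(m)$ at the indifference subsidy, derive the identical recursions $A(m)=\gamma A(0)+(1+\gamma B(0))C(m)$ and $B(m)=(1+\gamma B(0))D(m)$, solve them at $m=0$ to obtain $B(0)=D(0)/(1-\gamma D(0))$ and $A(0)=C(0)/[(1-\gamma)(1-\gamma D(0))]$, and then substitute into the indifference condition to recover~\eqref{eq:general_WhittleIndex}. The only difference is in the treatment of indexability: the paper dispatches it by invoking existing structural results for controlled-restart and partially observable restart bandits (Akbarzadeh--Mahajan), whereas you propose to argue it directly from the geometric decay of $D(m)$; either route suffices, and the algebraic core of the derivation is identical.
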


\begin{corollary}[Discrete-time Whittle index for binary-state Markov processes] 
\label{cor:whittle_discrete}
Let each arm $k \in\mathcal{K}= \{1, \dots, K\}$ consist of $N_k$ independent, discrete-time, ergodic Markov processes evolving on a binary state space with transition probability matrix
\begin{equation*}
    \bm{P}_k = \begin{bmatrix}
    1-\alpha_k & \beta_k  \\
    \alpha_k & 1-\beta_k
    \end{bmatrix}
\end{equation*}
for some fixed constants $\alpha_k,\beta_k\in(0,1)$. Let $Y_k(h) \in \{0, \dots, N_k\}$ denote the number of processes in state $1$ at time $h$, let $j_k\in \{0, \dots, N_k\}$ denote the most recent observed value of $Y_k(h)$, and let $m_k\in\mathbb{N}_0$ denote the number of steps since that observation. Then, using the conditional expectation from equation~\eqref{eq:expectedActive_multiProcess} into \eqref{eq:general_WhittleIndex} yields that the Whittle index associated with arm $k$ in state $(j_k, m_k)$ is
\begin{equation}
    \mathcal{W}(j_k,m_k;\gamma) = N_k\frac{\alpha_k}{\alpha_k + \beta_k} + \frac{1-\gamma(1-\alpha_k-\beta_k)}{1-\gamma} (1-\alpha_k-\beta_k)^{m_k}\left(j_k - N_k\frac{\alpha_k}{\alpha_k + \beta_k}\right), \quad \forall k\in\mathcal{K}.
\end{equation}
\end{corollary}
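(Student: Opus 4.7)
The plan is to treat the corollary as a direct specialization of Theorem~\ref{theo:whittle_index}: once we identify the functions $C(\cdot)$ and $D(\cdot)$ matching the affine form $\mathbb{E}[Y_k(h)\mid Y_k(h-m_k)=j_k] = C(m_k) + j_k D(m_k)$ from equation~\eqref{eq:expectedActive_multiProcess}, the result follows by plugging into~\eqref{eq:general_WhittleIndex} and tidying the algebra. No probabilistic or optimization argument is needed here beyond what Theorem~\ref{theo:whittle_index} already supplies; the content of the corollary is the explicit closed form.

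First, I would rearrange~\eqref{eq:expectedActive_multiProcess} as
\begin{equation*}
    \mathbb{E}[Y_k(h+m_k)\mid Y_k(h)=j_k] = \underbrace{N_k\tfrac{\alpha_k}{\alpha_k+\beta_k}\bigl(1-(1-\alpha_k-\beta_k)^{m_k}\bigr)}_{C(m_k)} + j_k\underbrace{(1-\alpha_k-\beta_k)^{m_k}}_{D(m_k)},
\end{equation*}
so that $C(0) = 0$ and $D(0) = 1$. The normalizing factor in~\eqref{eq:general_WhittleIndex} then reduces to $1-\gamma D(0) = 1-\gamma$, which is strictly positive since $\gamma\in(0,1)$.

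Next, I would evaluate the two bracketed differences in~\eqref{eq:general_WhittleIndex}. For the slope, $D(m_k)-\gamma D(m_k+1) = (1-\alpha_k-\beta_k)^{m_k}\bigl[1-\gamma(1-\alpha_k-\beta_k)\bigr]$, which already produces the geometric factor and the prefactor appearing in the claim. For the intercept, $C(m_k)-\gamma C(m_k+1)+\gamma C(0)$ expands to $N_k\tfrac{\alpha_k}{\alpha_k+\beta_k}\bigl[(1-\gamma) - (1-\alpha_k-\beta_k)^{m_k}\bigl(1-\gamma(1-\alpha_k-\beta_k)\bigr)\bigr]$, where the $(1-\gamma)$ piece will cancel against the denominator and leave the constant term $N_k\alpha_k/(\alpha_k+\beta_k)$ in the final expression.

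Finally, I would substitute both pieces into~\eqref{eq:general_WhittleIndex}, divide through by $1-\gamma$, and collect the coefficient of $(1-\alpha_k-\beta_k)^{m_k}\bigl[1-\gamma(1-\alpha_k-\beta_k)\bigr]/(1-\gamma)$, which multiplies exactly $j_k - N_k\alpha_k/(\alpha_k+\beta_k)$. This yields the stated closed form. The only subtle point worth flagging, rather than an obstacle, is that Theorem~\ref{theo:whittle_index} tacitly requires the arm to be indexable; this is verified in Appendix~\ref{app:proof-Whittle-Index} and so does not need to be re-established here. Beyond that, the argument is a mechanical verification and there is no genuine analytical difficulty.
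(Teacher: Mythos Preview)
Your proposal is correct and matches the paper's approach exactly: the corollary is stated as a direct substitution of~\eqref{eq:expectedActive_multiProcess} into~\eqref{eq:general_WhittleIndex}, and the paper offers no separate proof beyond that instruction. Your identification of $C(m_k)$, $D(m_k)$, the simplification $1-\gamma D(0)=1-\gamma$, and the subsequent algebra are all accurate.
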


\begin{corollary}[Discrete-time Whittle index with vanishing activation rates]
\label{cor:whittle_discrete_vanishing}
Under the same assumptions as in Corollary~\ref{cor:whittle_discrete}, consider further that the activation rate scales with the number of processes as $\alpha_k = \bar{\alpha}_k / N_k$, for a fixed constant $\bar{\alpha}_k > 0$, and let $N_k \to \infty$ for all $k\in\mathcal{K}$. Then, using the limiting conditional expectation from Lemma \ref{lemma:convergence_nrof_chains} into \eqref{eq:general_WhittleIndex} yields the Whittle index for arm $k$ in state $(j_k, m_k)$:
\begin{equation}
    \mathcal{W}(j_k,m_k;\gamma) = \frac{\bar{\alpha}_k}{\beta_k} + \frac{1-\gamma(1-\beta_k)}{1-\gamma} (1-\beta_k)^{m_k}\left(j_k - \frac{\bar{\alpha}_k}{\beta_k}\right), \quad \forall k\in\mathcal{K}.
\end{equation}
\end{corollary}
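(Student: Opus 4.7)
The plan is to recognize Corollary \ref{cor:whittle_discrete_vanishing} as a direct instantiation of Theorem \ref{theo:whittle_index} once the limiting conditional expectation from Lemma \ref{lemma:convergence_nrof_chains} is cast in the affine form required by that theorem. From equation \eqref{eq:expectedActive_multiProcess_limitDiscrete} I would read off
\[
C(m) = \frac{\bar{\alpha}_k}{\beta_k}\bigl(1-(1-\beta_k)^m\bigr), \qquad D(m) = (1-\beta_k)^m,
\]
which gives the boundary values $C(0)=0$, $D(0)=1$, and hence $1-\gamma D(0) = 1-\gamma$. The remainder of the proof is then a purely algebraic substitution into \eqref{eq:general_WhittleIndex}.

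The two key simplifications are $D(m_k)-\gamma D(m_k+1) = (1-\beta_k)^{m_k}\bigl(1-\gamma(1-\beta_k)\bigr)$, obtained by factoring $(1-\beta_k)^{m_k}$, and the analogous computation $C(m_k)-\gamma C(m_k+1)+\gamma C(0) = \frac{\bar{\alpha}_k}{\beta_k}\bigl[(1-\gamma)-(1-\beta_k)^{m_k}\bigl(1-\gamma(1-\beta_k)\bigr)\bigr]$. Dividing each by $1-\gamma$, collecting the $\bar{\alpha}_k/\beta_k$ terms, and combining with the coefficient of $j_k$ reproduces the stated closed form. As a consistency check, I would also verify that the same expression is obtained by letting $\alpha_k = \bar{\alpha}_k/N_k$ with $N_k\to\infty$ in Corollary \ref{cor:whittle_discrete}, using $N_k\alpha_k/(\alpha_k+\beta_k)\to \bar{\alpha}_k/\beta_k$ and $(1-\alpha_k-\beta_k)^{m_k}\to(1-\beta_k)^{m_k}$; the two routes must agree since Lemma \ref{lemma:convergence_nrof_chains} is itself obtained as the same scaling limit of the finite-$N_k$ dynamics.

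The main obstacle is not the algebra, which is routine, but rather ensuring that the hypotheses of Theorem \ref{theo:whittle_index} remain in force under the scaling $\alpha_k = \bar{\alpha}_k/N_k$ with $N_k\to\infty$. In particular, one must verify that the indexability property underlying the theorem's proof continues to hold when the per-arm state space becomes countably infinite, since $j_k$ is no longer bounded by $N_k$. Because $D(m)$ is strictly positive and strictly decreasing in $m$ while $C(m)$ is bounded and monotonically increasing to $\bar{\alpha}_k/\beta_k$, the Lagrangian subproblem's passive set remains monotone in the subsidy $\lambda$, mirroring the finite-$N_k$ argument in Appendix \ref{app:proof-Whittle-Index}. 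Indexability is therefore preserved, the affine hypothesis of Theorem \ref{theo:whittle_index} is satisfied in the limit, and the closed-form index follows by direct substitution as described above.
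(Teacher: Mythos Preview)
Your proposal is correct and follows exactly the route the paper intends: the corollary is stated as an immediate instantiation of Theorem~\ref{theo:whittle_index} by reading off $C(m)$ and $D(m)$ from Lemma~\ref{lemma:convergence_nrof_chains} and substituting into~\eqref{eq:general_WhittleIndex}, and your explicit algebra (factoring $(1-\beta_k)^{m_k}$ in both bracketed terms) is precisely what that substitution entails. The paper does not spell out the indexability check in the $N_k\to\infty$ limit that you raise, so your discussion there goes slightly beyond what the authors provide, but the core argument is identical.
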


\begin{corollary}[Continouos-time Whittle index with vanishing activation rates]
\label{cor:whittle_continous_vanishing}
Let each arm $k\in\mathcal{K}= \{1, \dots, K\}$ consist of $N_k$ independent continous-time Markov processes evolving on a binary state space with transition rates $\bar{\alpha}_k/N_k$ from state $0$ to state $1$, and transition rate $\bar{\beta}_k$ from state $1$ to state $0$, for some fixed constants $\bar{\alpha}_k,\bar{\beta}_k>0$ and $N_k\to\infty$ for all $k\in\mathcal{K}$. Let $Y_k(t) \in \mathbb{N}_0$ denote the number of processes in state $1$ at time $t$, let $j_k \in \mathbb{N}_0$ denote the most recent observation of arm $k$, and let $\tau_k\in\mathbb{R}_0$ denote the time elapsed since that last observation. Then, using the limiting conditional expectation from Lemma \ref{lemma:convergence_nrof_chains_CTMC} into \eqref{eq:general_WhittleIndex} yields the Whittle index for arm $k$ in state $(j_k, \tau_k)$:
\begin{equation}
    \mathcal{W}(j_k,\tau_k;\gamma) = \frac{\bar{\alpha}_k}{\bar{\beta}_k} + \frac{1-\gamma e^{-\bar{\beta}_k}}{1-\gamma} e^{-\bar{\beta}_k \tau_k}\left(j_k - \frac{\bar{\alpha}_k}{\bar{\beta}_k}\right), \quad \forall k\in\mathcal{K}.
\end{equation}
\end{corollary}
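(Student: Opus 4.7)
The plan is to apply Theorem \ref{theo:whittle_index} directly, extracting the required functions $C$ and $D$ from the continuous-time limit in Lemma \ref{lemma:convergence_nrof_chains_CTMC}. First I would identify, for each arm $k$, the building blocks
\begin{equation*}
    C(\tau) = \frac{\bar{\alpha}_k}{\bar{\beta}_k}\bigl(1 - e^{-\bar{\beta}_k \tau}\bigr), \qquad D(\tau) = e^{-\bar{\beta}_k \tau},
\end{equation*}
so that the limiting conditional mean $\mathbb{E}[Y_k(t+\tau)\mid Y_k(t)=j_k] = C(\tau) + j_k D(\tau)$ matches the affine hypothesis of Theorem \ref{theo:whittle_index} verbatim. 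Observing $C(0) = 0$ and $D(0) = 1$ immediately gives the normalizing factor $1/(1-\gamma D(0)) = 1/(1-\gamma)$ needed in \eqref{eq:general_WhittleIndex}.

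The second step is a short algebraic substitution. The slope term reduces cleanly to $D(\tau_k) - \gamma D(\tau_k+1) = e^{-\bar{\beta}_k \tau_k}\bigl(1 - \gamma e^{-\bar{\beta}_k}\bigr)$, and the intercept simplifies in a parallel fashion to $C(\tau_k) - \gamma C(\tau_k+1) + \gamma C(0) = (\bar{\alpha}_k/\bar{\beta}_k)\bigl[(1-\gamma) - e^{-\bar{\beta}_k\tau_k}(1-\gamma e^{-\bar{\beta}_k})\bigr]$. Dividing both by $1-\gamma$, regrouping the constant $\bar{\alpha}_k/\bar{\beta}_k$ with the $j_k$-dependent term, and factoring out the common quantity $\frac{1-\gamma e^{-\bar{\beta}_k}}{1-\gamma}\,e^{-\bar{\beta}_k \tau_k}$ yields exactly the closed form stated in the corollary. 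The derivation is structurally identical to that of Corollary \ref{cor:whittle_discrete_vanishing}, with $(1-\beta_k)^{m_k}$ replaced throughout by its continuous-time analogue $e^{-\bar{\beta}_k \tau_k}$.

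The main obstacle, in my view, is not algebraic but conceptual: the Whittle formula in Theorem \ref{theo:whittle_index} is derived under discrete decision epochs, whereas here the underlying processes live in continuous time. I would handle this by treating $\tau_k$ as a continuous elapsed-time argument fed into $C$ and $D$, while keeping unit-length decision intervals so that $\tau_k + 1$ denotes one unit of wall-clock time later and $\gamma$ remains the per-unit-time discount factor. Under this convention, the single-arm Lagrangian relaxation and indexability argument from Appendix \ref{app:proof-Whittle-Index} go through unchanged, since they depend only on the affine identity $\mathbb{E}[Y_k] = C(\tau) + j_k D(\tau)$ and on the monotone, strictly positive decay of $D$ in its argument, both of which are immediate for $D(\tau) = e^{-\bar{\beta}_k \tau}$. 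Indexability is therefore inherited from the same monotonicity argument that underlies Corollaries \ref{cor:whittle_discrete} and \ref{cor:whittle_discrete_vanishing}, and no additional verification is required beyond recording this time-discretization convention.
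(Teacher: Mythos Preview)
Your proposal is correct and follows exactly the route the paper takes: the corollary is stated in the paper without a separate proof, as a direct substitution of the limiting conditional expectation from Lemma~\ref{lemma:convergence_nrof_chains_CTMC} into the general formula~\eqref{eq:general_WhittleIndex} of Theorem~\ref{theo:whittle_index}. Your explicit treatment of the discrete-epoch versus continuous-time convention is in fact more careful than the paper, which leaves that point implicit.
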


As seen in Corollaries \ref{cor:whittle_discrete}--\ref{cor:whittle_continous_vanishing}, the analytical structure of the Whittle index consists of two parts. The first term reflects the long-run expected value of arm $k\in\mathcal{K}$, corresponding to the steady-state behavior of the underlying stochastic process; and the second term captures the transient deviation from steady state, weighted by the evolution of the system since the last observation and the discount factor $\gamma$. This form naturally balances exploitation of arms known to yield high long-term rewards against exploration of arms whose latent state may have been favorable when they were last observed.

With the Whittle indices defined for each arm as per Theorem~\ref{theo:whittle_index} and its corollaries, the proposed Whittle index policy establishes that, at each discrete-time decision epoch $h \geq 0$, the decision entity selects the arm $A(h) \in \mathcal{K}$ that maximizes the Whittle index given the current belief state $(j_k, m_k)_{k \in \mathcal{K}}$ and discount factor $\gamma \in (0,1)$,
\begin{equation}
    A(h) = \arg \max_{k \in \mathcal{K}} \;\mathcal{W}(j_k, m_k; \gamma),
    \label{eq:Whittle-index-policy}
\end{equation}
where ties are broken arbitrarily. The belief states are subsequently updated according to~\eqref{eq:belief-update}, ensuring that the policy adapts to the restless evolution of unobserved arms. Similarly, in the continuous-time setting, where decisions are made at arbitrary times $t \geq 0$ and the belief state for each arm $k$ is represented by $(j_k, \tau_k)_{k \in \mathcal{K}}$ with $\tau_k$ denoting the time elapsed since the last observation, the policy selects the arm $A(t) \in \mathcal{K}$ that maximizes the Whittle index:
\begin{equation}
    A(t) = \arg \max_{k \in \mathcal{K}} \;\mathcal{W}(j_k, \tau_k; \gamma),
\end{equation}
with ties broken arbitrarily. Upon selection of arm $A(t) = k$, the belief state updates to $(Y_k(t), 0)$, while for unselected arms, $\tau_k$ increments by the time until the next decision epoch. This continuous-time adaptation maintains the same decomposability and tractability benefits as in the discrete-time setting, with computational complexity $O(K)$ per decision.

\begin{corollary}[Myopic index policy]
In the limiting undiscounted case where $\gamma \to 0$, the Whittle index policy reduces to the myopic policy, which selects the arm with the highest expected reward,
\begin{align*}
    A(h) &= \arg \max_{k\in\mathcal{K}} \; \mathbb{E}\left[Y_k(h)\mid Y_k(h-m_k) = j_k\right], \qquad \text{(discrete time)},\\
    A(t) &= \arg \max_{k\in\mathcal{K}} \; \mathbb{E}\left[Y_k(t)\mid Y_k(t-\tau_k) = j_k\right], \qquad \text{(continuous time)}.
\end{align*}    
\end{corollary}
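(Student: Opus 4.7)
The plan is to prove the corollary by directly taking the limit $\gamma \to 0$ in the general Whittle-index expression from Theorem~\ref{theo:whittle_index} and recognizing that the result coincides with the conditional expectation of the reward. Since the Whittle index policy in \eqref{eq:Whittle-index-policy} selects the arm maximizing $\mathcal{W}(j_k,m_k;\gamma)$, and since the ordering of arms is preserved under the limit, the selection rule reduces to an argmax of conditional expectations, which is precisely the myopic policy.

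Concretely, I would start from
\begin{equation*}
    \mathcal{W}(j_k,m_k;\gamma) = \frac{1}{1-\gamma D(0)}\Bigl[ C(m_k) - \gamma C(m_k+1) + \gamma C(0)\Bigr] + \frac{j_k}{1-\gamma D(0)}\Bigl[D(m_k) - \gamma D(m_k+1)\Bigr],
\end{equation*}
and observe that as $\gamma \to 0$ each factor of $\gamma$ vanishes and the prefactor $1/(1-\gamma D(0)) \to 1$. Hence
\begin{equation*}
    \lim_{\gamma\to 0} \mathcal{W}(j_k,m_k;\gamma) = C(m_k) + j_k\, D(m_k) = \mathbb{E}\bigl[Y_k(h)\mid Y_k(h-m_k)=j_k\bigr],
\end{equation*}
where the last equality uses the affine assumption of Theorem~\ref{theo:whittle_index}. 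Substituting this limit into \eqref{eq:Whittle-index-policy} gives the discrete-time expression in the statement. For the continuous-time case, I would apply the identical limiting argument to the continuous-time Whittle index of Corollary~\ref{cor:whittle_continous_vanishing} (or equivalently to the general affine form with $m_k$ replaced by $\tau_k$), which yields $\mathbb{E}[Y_k(t)\mid Y_k(t-\tau_k)=j_k]$ and the corresponding argmax.

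There is no genuine obstacle here: the argument is a one-line continuity calculation, since $1-\gamma D(0) > 0$ for $\gamma$ in a neighborhood of $0$ (using $|D(0)| \le 1$ for the bounded dynamics of Corollaries~\ref{cor:whittle_discrete}--\ref{cor:whittle_continous_vanishing}), so the ratio is continuous at $\gamma=0$ and the limit passes into the argmax. The only point worth emphasizing in the write-up is the interpretation: $\gamma \to 0$ geometrically discounts all future rewards to zero in $\mathcal{J}_\gamma(\phi)$, so the optimal relaxed policy collapses to maximizing the one-step expected reward, which is exactly what the limiting Whittle index encodes.
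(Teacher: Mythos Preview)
Your proposal is correct and matches the paper's implicit reasoning: the corollary is stated without proof because it follows immediately by setting $\gamma\to 0$ in the index formula \eqref{eq:general_WhittleIndex}, which collapses to $C(m_k)+j_k D(m_k)=\mathbb{E}[Y_k(h)\mid Y_k(h-m_k)=j_k]$. Your additional remarks on continuity of $1/(1-\gamma D(0))$ and preservation of the argmax are fine elaborations but not strictly needed.
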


\begin{remark}[Extension to rested multi-arm bandits]
The rested multi-armed bandit problem is a special case of the RMAB problem in which passive arms remain frozen and do not evolve stochastically when unobserved. In that case, the Whittle index policy reduces to what is called the Gittins index policy and is optimal~\cite{gittins1979bandit}. Thus, the results obtained in this paper are also applicable to the rested multi-armed bandits.
\end{remark}

\begin{remark}[Extension to multiple arm selection]
While this paper focuses on the single-arm selection constraint, the proposed Whittle index policy naturally extends to the more general case where the decision entity may select $L\leq K$ arms at each time step. In this generalized setting, the optimal policy $\mathcal{A}(h)\subseteq \mathcal{K}$ under the Whittle index framework becomes:
\begin{equation}
    \mathcal{A}(h) = \arg\max_{\Theta \subseteq \mathcal{K}, |\Theta|=L} \sum_{k \in \Theta} \mathcal{W}(j_k,m_k;\gamma),
\end{equation}
which is equivalent to selecting the $L$ arms with the highest Whittle indices $\mathcal{A}(h) = \{k_1, k_2, \ldots, k_L\} $ such that
\begin{equation}
\mathcal{W}(j_{k_1}, m_{k_1}; \gamma) \geq \mathcal{W}(j_{k_2}, m_{k_2}; \gamma) \geq \cdots \geq \mathcal{W}(j_{k_L}, m_{k_L}; \gamma) \geq \max_{k \notin \mathcal{A}(h)} \mathcal{W}(j_k, m_k; \gamma),
\end{equation}
for any discount factor $\gamma\in(0,1)$. This decomposition property follows directly from the indexability of the relaxed problem and the additive structure of the Lagrangian relaxation. The belief state update rule in equation \eqref{eq:belief-update} remains unchanged, with the modification that $(j_k, m_k)$ is updated to $(Y_k(h), 0)$ if $k \in \mathcal{A}(h)$, and to $(j_k, m_k + 1)$ otherwise. The computational complexity becomes $\mathcal{O}(K \log K)$, required for sorting $K$ indices, making this extension practically tractable even for large-scale problems.
\end{remark}

\section{Simulation results}
\label{sec:simulation_results}
In this section, we evaluate the performance of the proposed Whittle index policy through extensive numerical simulations. These experiments are designed to validate the theoretical derivations presented in Section \ref{sec:restless_MAB_framework}, particularly focusing on the policy's ability to handle the restless dynamics and partial observability inherent in RMPMAB problems. We simulate scenarios with varying numbers of arms $K$ and processes per arm $N_k$, using transition parameters $\alpha_k$ and $\beta_k$ drawn from different distributions to reflect homogeneous and heterogeneous arm behaviors. The time horizon is set to $T=4000$ steps, with the initial states for all processes being drawn randomly from the stationary distribution $\bm{\pi}_k = [\beta_k / (\alpha_k + \beta_k), \alpha_k / (\alpha_k + \beta_k)]$ to ensure ergodicity from the start, unless stated otherwise. In each scenario, the decision entity selects exactly one arm at each discrete time step $h$, receiving as reward the aggregate number of active processes $Y_k(h)$. All simulations are implemented in Python using NumPy for efficient computation of Markov process evolutions, and all results are averaged over $100$ independent Monte Carlo trials to ensure statistical reliability. The code is proprietary and subject to intellectual property protections, but access to it may be discussed with the corresponding authors upon reasonable request.

\subsection{Performance analysis}
To quantify the performance of the decision policy under this partially observed and dynamically evolving system, we employ strong regret as the primary metric. Strong regret compares the cumulative reward of a policy against that of genie-aided optimal policy which selects the arm with the maximum instantaneous reward $\max_k Y_k(h)$ at each step,
\begin{equation}
    \mathcal{R}(T) = \sum_{h=1}^{T} \max_{k\in\mathcal{K}} Y_{k}(h) - \sum_{h=1}^{T} Y_{A(h)}(h).
\end{equation}
This metric is particularly suitable in the context of our problem, as it captures the full opportunity cost of suboptimal decisions in dynamic and partially observable environments and gives an accurate representation of the transient phenomena in the selected arms~\cite{jung2019regret}. Its counterpart, the weak regret, which compares policies against a simplified or partially informed benchmark, e.g., selecting the arm with the highest steady-state mean, underestimates costs in dynamic settings, and as such we do not employ it to measure the performance of our proposed index policy.

Considering the above, the simulation framework initializes the decision policy by allowing it to observe each arm exactly once in the initial $K$ time steps, ensuring that the policy acquires a baseline understanding of the system state. Thereafter, the policy begins making arm selections $A(h)$ for $h \geq K$ according to \eqref{eq:Whittle-index-policy}, leveraging the observed data to inform its decision-making process based on the Whittle index strategy. 

For comparison, we benchmark the Whittle index policy against established baselines, including
\begin{itemize}
    \item {
    \textit{Round Robin}: This deterministic policy cycles through the arms in a fixed order: $A(h) = (h \mod K) + 1$. It ignores state information and serves as a simple, non-adaptive baseline.
    }
    \item {
    \textit{$\epsilon$-Greedy}: With probability $\epsilon = .15$, the policy selects an arm uniformly at random (exploration); otherwise, it selects $A(h) = \arg\max_k \mu_k$ (explotation), where $\mu_k$ is the sample mean of arm $k$ up to time $h-1$.
    }
    \item {
    \textit{Bayesian Upper Confidence Bound}: Assuming sub-Gaussian rewards with parameter $c=1.96$, this policy selects arms according to $A(h) = \arg\max_k \left[ \mu_k + c \sqrt{\sigma_k / \mathcal{T}_k} \right]$, where $\mu_k$ is the sample mean, $\sigma_k$ is the sample variance, and $\mathcal{T}_k$ is the number of times arm $k$ has been selected up to time $h-1$.
    }
    \item {
    \textit{Thompson Sampling}: This probabilistic policy maintains the belief state $\bar{\bm{Z}}_k(h)$ for the distribution of the latent state via the update in \eqref{eq:belief-update-matrix}, then it samples the state of the processes from the posteriors and selects the arm that maximizes the estimated number of active chains.
    }
\end{itemize}

\subsection{Empirical evaluation}
Figure \ref{fig:regretHeter_time} illustrates the average cumulative strong regret over time for a heterogeneous arm configuration with $K=30$ arms, each comprising $N_k=100$ independent Markov processes. For each Monte Carlo trial, the transition parameters are sampled independently per arm as $\alpha_k \sim \mathcal{U}(0,1)$ and $\beta_k \sim \mathcal{U}(\alpha_k,1)$, ensuring arm heterogeneity while maintaining $\beta_k \geq \alpha_k$ to prevent outlier arms with disproportionately high steady-state rewards. This setup fosters a competitive environment where multiple arms exhibit comparable expected rewards, compelling policies to balance exploration and exploitation effectively rather than converging prematurely to a single dominant arm. As shown in the figure, the myopic Whittle index policy consistently achieves the lowest regret, outperforming the baselines by leveraging belief updates on latent process states to adapt to restless dynamics. In contrast, Round Robin incurs the highest regret due to its non-adaptive nature, while $\epsilon$-Greedy, Bayesian UCB, and Thompson Sampling exhibit intermediate performance. At $T=4000$, the Whittle index policy incurs a $37\%$ reduction in cumulative regret compared to Thompson Sampling, a $42\%$ reduction relative to Bayesian UCB, and a $52\%$ reduction compared to $\epsilon$-Greedy. The suboptimal performance of these standard algorithms arises from their reliance on empirical reward statistics, often without fully accounting for the restless nature of the arms where states evolve even when unobserved. Specifically, Round Robin lacks any adaptive mechanism or modeling of dynamics, resulting in rigidly cyclic selections that ignore latent state changes entirely; $\epsilon$-Greedy and Bayesian UCB focus on empirical means with random exploration or confidence bounds, respectively, but treat rewards as largely stationary, leading to inefficient belief updates and an inability to propagate transient dependencies for unselected arms. Even Thompson Sampling, which does incorporate belief propagation on latent states through Bayesian posteriors, suffers from sampling-induced randomness that can yield less accurate exploitation of evolving dynamics compared to the deterministic, index-driven prioritization of the Whittle policy.
\begin{figure*}[!t]
    \centering
    \subfigure[Heterogeneous arms]{\includegraphics[width=0.43\textwidth]{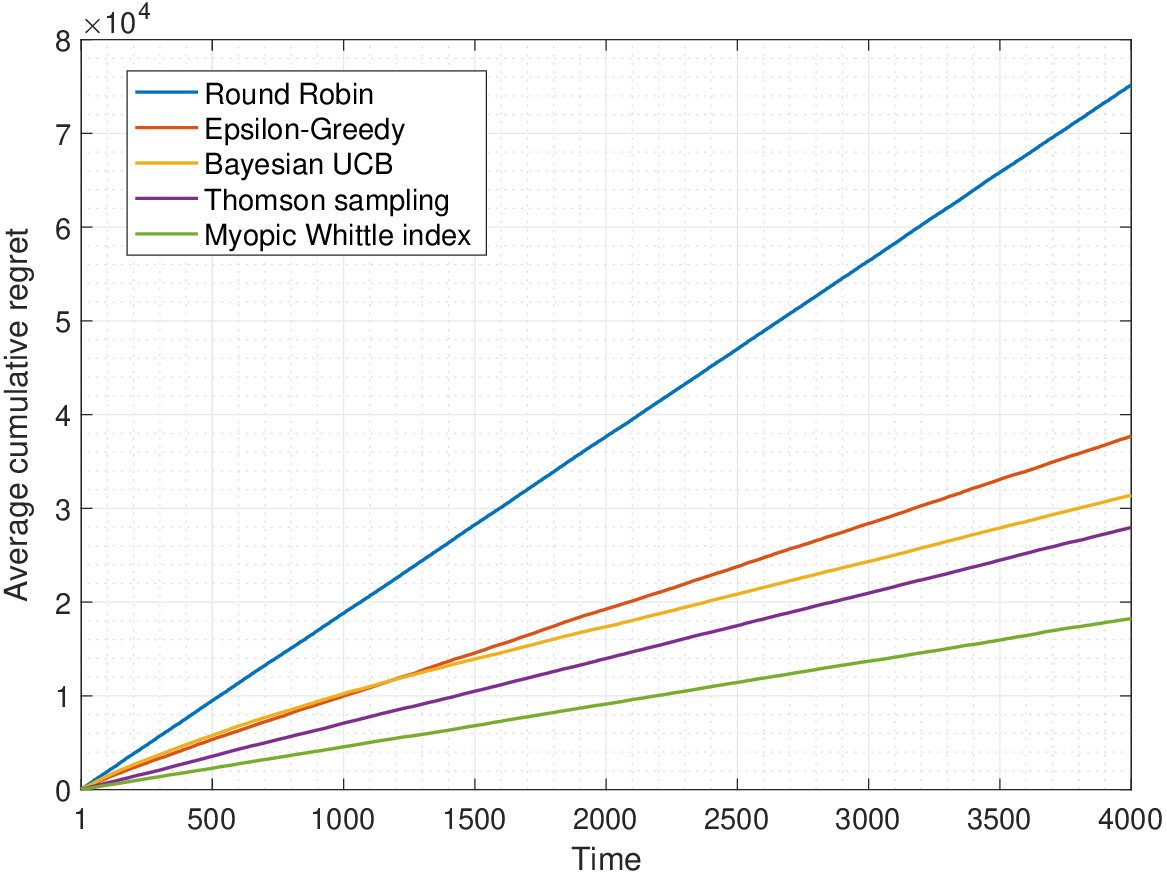} \label{fig:regretHeter_time}}$\quad$
    \subfigure[Homogeneous arms]{\includegraphics[width=0.43\textwidth]{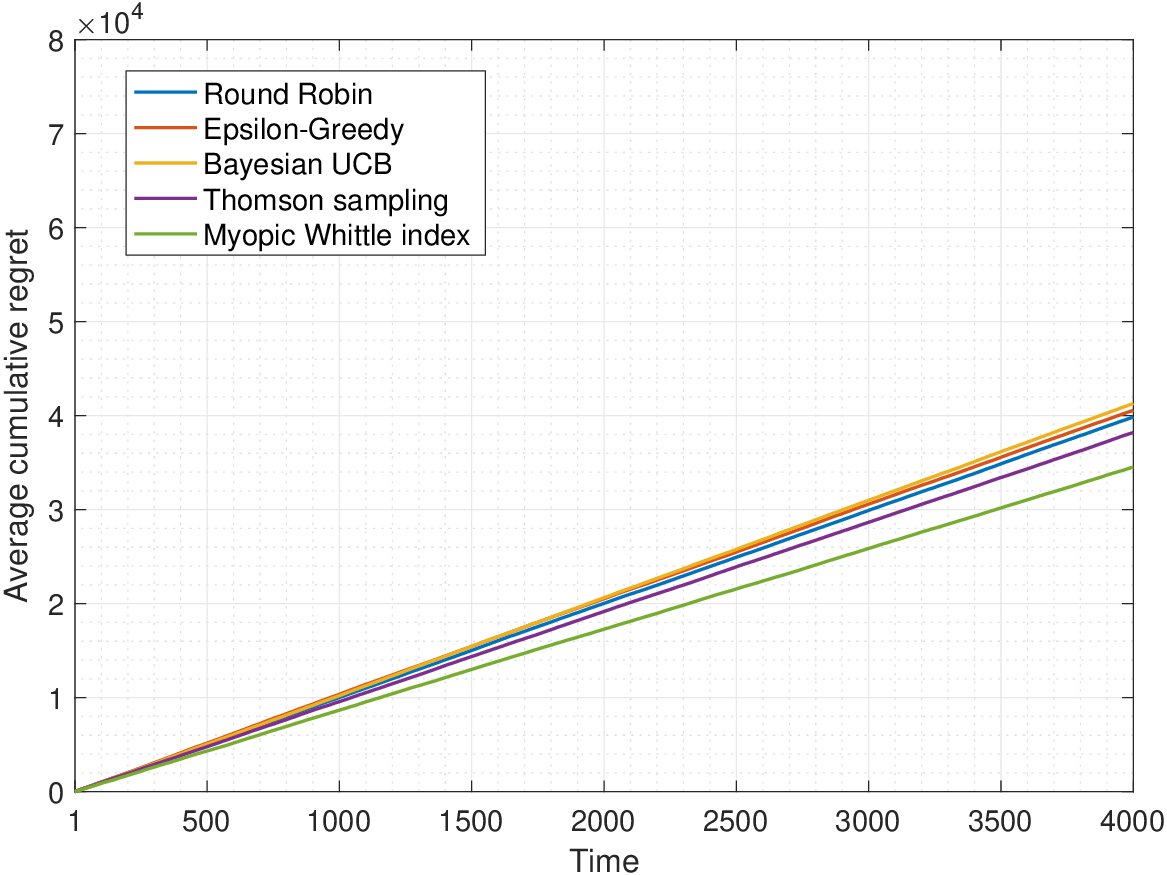} \label{fig:regretHomog_time}}
    \subfigure[Discrete time with vanishing activation rates]{\includegraphics[width=0.43\textwidth]{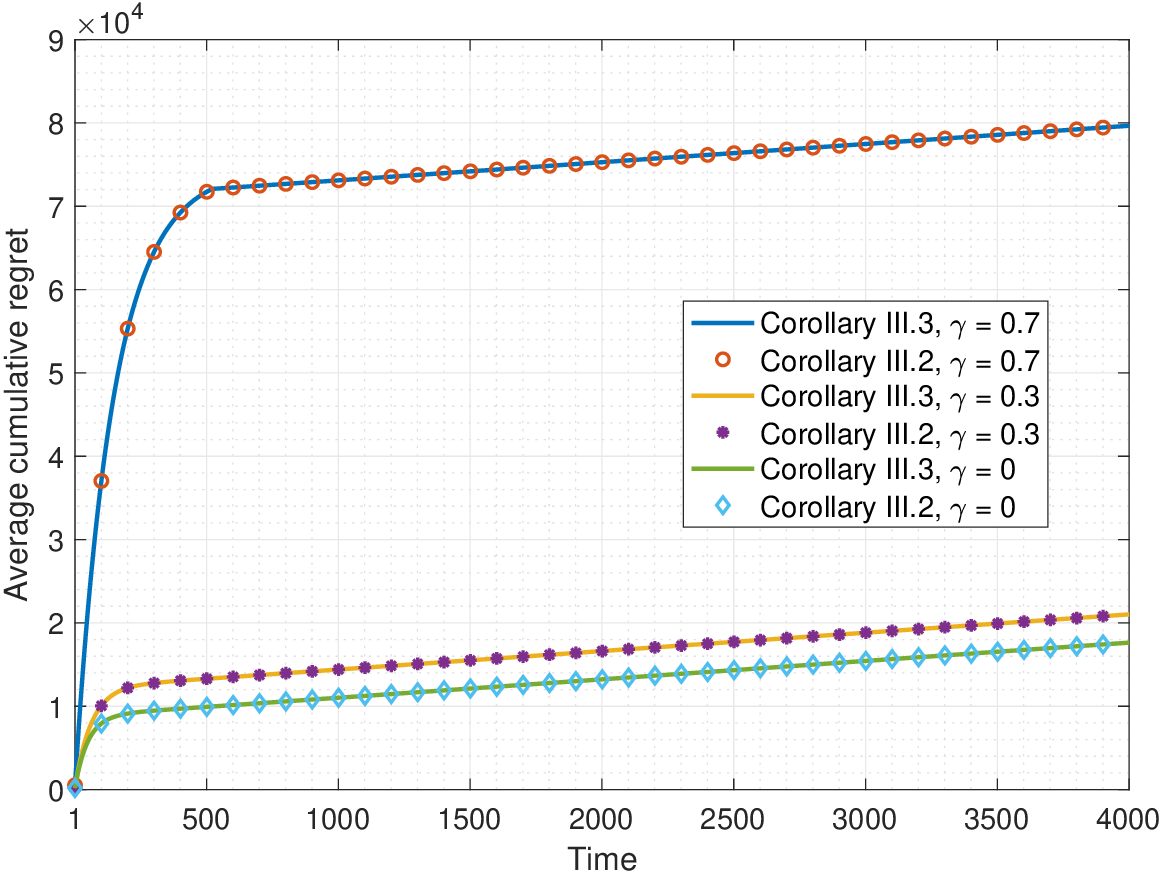} \label{fig:regretHeter_corollary3}}$\quad$
    \subfigure[Continous time with vanishing activation rates]{\includegraphics[width=0.43\textwidth]{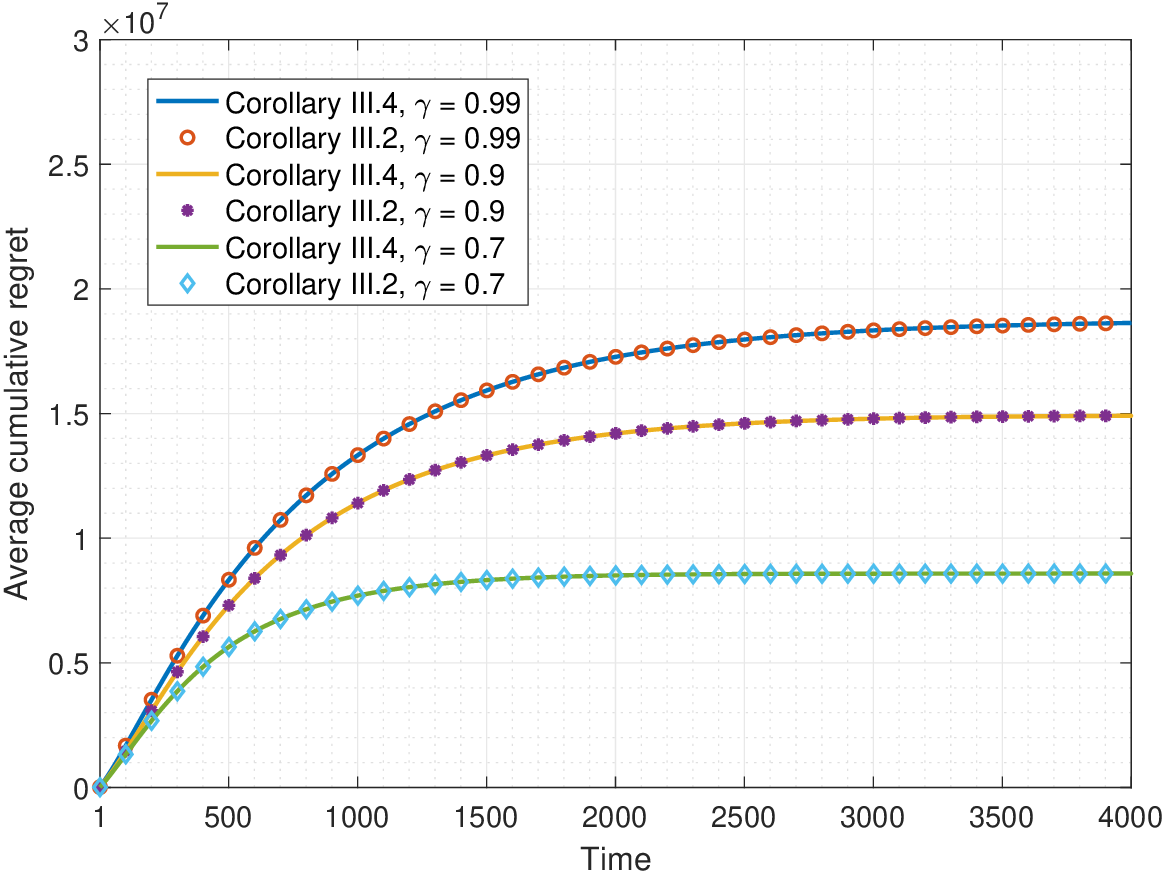} \label{fig:regretHeter_corollary4}}
    \caption{Average cumulative strong regret over time for the proposed restless multi-process multi-armed bandit (RMPMAB) framework. Figure~\ref{fig:regretHeter_time} shows results for a heterogeneous arm configuration, where the myopic Whittle index policy achieves the lowest regret and outperforms Round Robin, $\epsilon$-Greedy, Bayesian UCB, and Thompson Sampling. Figure~\ref{fig:regretHomog_time} presents results for a homogeneous arm configuration with identical transition parameters, where the performance gap narrows but the Whittle index remains superior. Figure~\ref{fig:regretHeter_corollary3} evaluates the discrete-time Whittle index policies (Corollaries~\ref{cor:whittle_discrete} and~\ref{cor:whittle_discrete_vanishing}) under vanishing activation rates and different discount factors, showing that both formulations yield consistent regret profiles, confirming theoretical equivalence. Figure~\ref{fig:regretHeter_corollary4} compares the continuous- and discrete-time Whittle index policies (Corollaries~\ref{cor:whittle_discrete} and~\ref{cor:whittle_continous_vanishing}), illustrating matching asymptotic behavior and the effect of the discount factor on convergence and regret growth.}
\end{figure*}

Figure \ref{fig:regretHomog_time} depicts the average cumulative strong regret over time for a homogeneous arm configuration comprising $K=30$ arms, each with $N_k=100$ independent Markov processes. The transition parameters are shared across all arms and sampled at the beginning of each Monte Carlo trial from uniform distributions as $\alpha \sim \mathcal{U}(0,1)$ and $\beta \sim \mathcal{U}(\alpha, 1)$. In this uniform reward environment, the proposed myopic Whittle index policy maintains superior performance, achieving an average of $13\%$ reduction in regret to the other baseline algorithms at $T=4000$ steps. In particular, we can see that the baseline algorithms perform comparably to Round Robin, effectively behaving as non-adaptive strategies in the absence of distinct arm differences. This degradation is particularly evident in homogeneous settings with similar arm rewards, where standard policies deviate from asymptotic optimality bounds due to inadequate handling of restless state transitions in unselected arms, as discussed in~\cite{verloop2016asymptiotically}. Moreover, we noticed that the Whittle index policy perform the worst when the magnitude of the eigenvalue $|1 - \alpha - \beta|$ approaches zero, indicative of fast-mixing Markov chains where states exhibit rapid variability and minimal persistence from one time step to the next, thereby reducing the efficacy of the index policy in capturing transient dynamics. Despite this, the computational complexity of the Whittle index remains lower than that of the more sophisticated alternatives, making it preferable when the state transition parameters can be estimated or known a priori to the decision entity.

Figure \ref{fig:regretHeter_corollary3} compares the average strong cumulative regret over time for the Whittle index policies derived in Corollaries \ref{cor:whittle_discrete} and \ref{cor:whittle_discrete_vanishing} with $K = 50$ arms, each comprising $N = 10^5$ independent Markov processes. For each Monte Carlo trial, the transition parameters are sampled independently per arm as $\alpha_k \sim \mathcal{U}(0, 1)$ and $\beta_k \sim \mathcal{U}(\alpha_k, 1)$, introducing heterogeneity across arms and reflecting a broad spectrum of restless dynamics. To highlight the transient dynamics and the influence of the discount factor $\gamma$, the simulations initialize each Markov process uniformly at random in either state 0 or 1, rather than at the ergodic steady state, prompting the arms to converge toward their stationary distributions over time. This non-stationary starting point accentuates the behavior of $\gamma$, with the figure evaluating performance across a myopic regime ($\gamma=0$), where the policy prioritizes immediate rewards; a short-sighted regime ($\gamma=0.3$), where the policy balances immediate rewards with moderate consideration of future gains; and a farsighted regime ($\gamma=0.7$), where the policy assigns greater weight to delayed rewards and sustains more persistent exploration of arms. Overall, the analysis demonstrates that both corollaries produce nearly indistinguishable regret profiles in this limit across all discount factors, confirming the analytical consistency between the two formulations. Notably, the results reveal that lower values of $\gamma$ lead to substantially reduced cumulative regret, with the myopic policy achieving the lowest asymptotic regret due to its rapid adaptation to the unstable initial conditions and fast transition to steady state. Once the system stabilizes and the Whittle indices converge to the steady-state behavior of the system, the regret profiles exhibit parallel growth rates, as indicated by the linear segments of the curves. These findings suggest that, in environments with high arm heterogeneity and restless dynamics, myopic strategies emphasizing immediate information gain outperform more forward-looking approaches, though the tunable $\gamma$ allows adaptation to other scenarios where larger values may yield more favorable cumulative regrets.

Figure \ref{fig:regretHeter_corollary4} compares the average strong cumulative regret over time for the Whittle index policies from Corollaries \ref{cor:whittle_discrete} and \ref{cor:whittle_continous_vanishing}, representing respectively the discrete-time and continuous-time formulations of our framework. The setup features a heterogeneous configuration with $K = 50$ arms, each comprising $N = 10^5$ independent Markov processes, where transition parameters $\alpha_k \sim \mathcal{U}(0, 1)$ and $\beta_k \sim \mathcal{U}(\alpha_k, 1)$ vary per arm, and a fine time step $\Delta_t = 10^{-2}$ enables detailed modeling of rapid dynamics. Each process starts uniformly at random in state 0 or 1, diverging from the ergodic steady state to underscore transient effects and the impact of the discount factor $\gamma$, evaluated under a farsighted regime ($\gamma=0.7$), where the policy assigns greater weight to delayed rewards and sustains more persistent exploration of arms; a highly farsighted regime ($\gamma=0.9$), where the policy places strong emphasis on long-term reward accumulation; and an extremely farsighted regime ($\gamma=0.99$), where the policy further focuses on delayed rewards and approaches the limiting undiscounted case. The analysis confirms the consistency of both formulations across these settings, revealing that higher $\gamma$ values exacerbate regret due to challenges in predicting long-term rewards under the unstable initial conditions and the rapid state mixing from the unstable initial conditions. As the system stabilizes and Whittle indices settle into steady-state behavior, the cumulative regret curves exhibit parallel slopes, evident in the linear segments of the plots. The flexibility in tuning the discount factor for calculating the Whittle index proves advantageous, with lower values of $\gamma$ performing well in our fast-evolving simulation, while higher values may prove preferable in scenarios with slower or more predictable dynamics.

\section{Experimental results}
\label{sec:numerical_results}
To demonstrate the practical utility of our proposed RMPMAB framework and Whittle index policy, we apply it to a real-world biological imaging dataset focused on cell cycle profiling. This section is divided into three parts: the experimental setup for data acquisition, the image processing pipeline to adapt the data to the RMAB framework, and the experimental evaluation of the policy's performance.

\subsection{Data gathering}
U2OS cells expressing the FUCCI (Fluorescent Ubiquitination-based Cell Cycle Indicator) system were cultured in a Cytomat automated incubator (Thermo Fisher Scientific) at 37$^\circ$C  with $5\%$ CO$_2$ to maintain optimal growth conditions throughout the experiment. With that, we had direct visualization of cell cycle progression through distinct fluorescent markers: each nucleus glows red in G1 phase (mKO2-hCdt1, $561$ nm excitation), green in G2/M phases (mAG-hGeminin, $488$ nm excitation), or a blazing orange when both markers collide in S phase~\cite{sakaue2008visualizing}. Cells were seeded in a $96$-well plate and allowed to attach and proliferate overnight prior to imaging. All wells contained cells under identical culture conditions, allowing assessment of the natural heterogeneity and asynchrony in cell cycle progression across spatially separated populations.

Automated time-lapse imaging was performed using a Cephla Squid widefield microscope, scanning all $96$ wells sequentially at one-hour intervals over a $29$-hour period. For each well, the microscope focused on four predefined fields of view located centrally in the well to capture representative cellular activity while avoiding edge effects and minimizing imaging time per well. At each time point, three channels were acquired per field of view: brightfield for cell morphology and confluency assessment, $561$ nm excitation for red fluorescence, and $488$ nm excitation for green fluorescence. Images were captured at a resolution of $3000 \times 3000$ pixels with 10$\times$ magnification. The one-hour interval between complete plate scans was chosen to balance temporal resolution with practical constraints: each full scanning cycle required approximately $15$-$20$ minutes, and longer intervals minimized both photo-toxicity from repeated fluorescence excitation and temperature fluctuations from removing the plate from the incubator. This fully automated workflow eliminated manual intervention and ensured consistent imaging conditions across the entire experiment, yielding a total of $384$ images per channel per hour.



\subsection{Image processing}
To integrate the biological imaging data into the RMPMAB framework, we developed a processing pipeline with the primary goal of identifying the field of view with the highest number of cells in the G1 phase at each time step. Focusing on this objective, we extracted binary state information from the fluorescence images by processing only the $561$ nm channel.

In the resulting dataset, we have a total of $104$ images per well across all four fields of view over the entire experiment. For each field of view, we first computed a background image via pixel-wise median across all $29$ time points to capture static artifacts like uneven illumination or debris, subtracting it from each time-specific image to enhance signal-to-noise and isolate dynamic cellular signals. We then normalized intensities to $[0,1]$ and applied a threshold at the $98.5$th percentile, such that pixels above this threshold were assigned a value of $1$ (indicating potential G1 activity), while those below were set to $0$. For finer granularity suitable for RMPMAB modeling, we divided each binarized $3000\times3000$ pixel image into $40\times40$ pixel grids, yielding a total of $5625$ grid elements per field of view. Each grid was classified as active (state $1$) if more than $40\%$ of its pixels were $1$, otherwise it was classified as inactive (state $0$). This approach models grids as Markov processes, capturing localized G1 activity. Figure~\ref{fig:mask_overlay} illustrates this for a representative field of view, overlaying the $561$ nm binary mask (in red) on the brightfield image to highlight detected G1 regions.

In the RMAB formulation, the $96$ wells are mapped to $K=96$ arms, each comprising $N = 22500$ processes when aggregated across all four fields of view per well. The observable $Y_k(h)$ for arm $k$ at time step $h$ represents the percentage of active grids in the selected well, with the policy aiming to maximize G1 detections by prioritizing high-activity wells. Using the initial three hours of the $29$-hour dataset, we estimated transition parameters $\alpha_k$ and $\beta_k$ per well via maximum likelihood to parameterize the Whittle index policy for decision-making. For the remaining $26$ hours, we enhanced robustness through data augmentation by bootstrapping hourly observations, averaging combinations of two and three fields of view to generate $10$ points per well per hour to mimic partial-view variability. Figure~\ref{fig:coverage_heatmap} shows a heatmap of active grid percentages across all wells and hours, illustrating the spatio-temporal G1 activity patterns for policy evaluation.
\begin{figure*}[!t]
    \centering
    \subfigure[]{\includegraphics[width=0.40\textwidth]{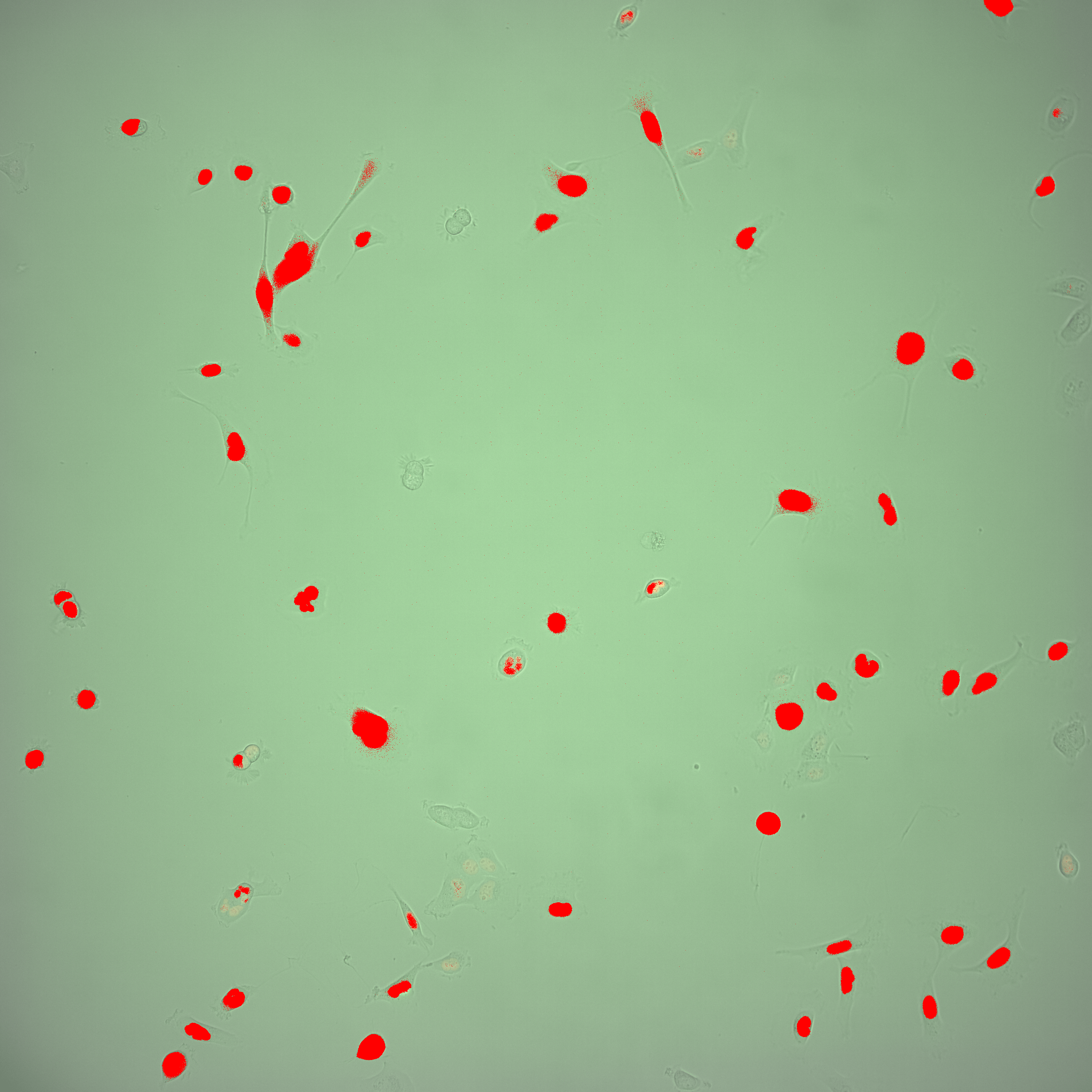} \label{fig:mask_overlay}}$\quad$
    \subfigure[]{\includegraphics[width=0.43\textwidth]{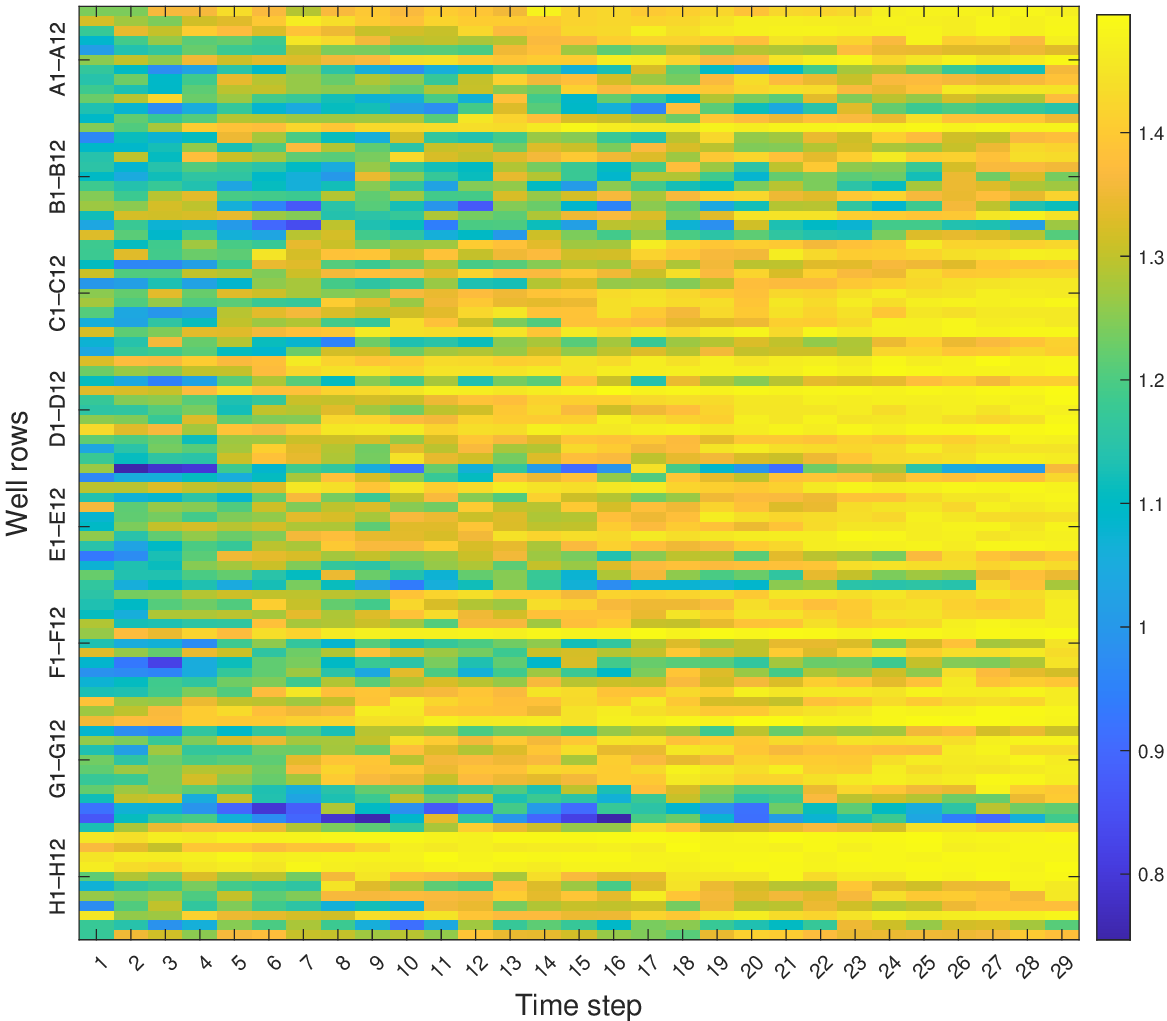} \label{fig:coverage_heatmap}}
    \caption{(a) Example of binary segmentation of G1-active regions (red overlay) on brightfield images for a representative well, after background subtraction and thresholding of the $561$ nm fluorescence channel. (b) Heatmap of the percentage of active grids across all wells and time hours, illustrating illustrating the spatio-temporal dynamics of G1 activity used for policy evaluation.}
    \label{fig:numerical_results}
\end{figure*}

\subsection{Experimental evaluation}
We evaluated the performance of the proposed Whittle index policy using the processed biological dataset, focusing on its ability to optimize the selection of wells with the highest G1 activity over time. Figure~\ref{fig:regret_biology} presents the cumulative strong regret over $240$ time steps, comparing the Whittle index policy with $\gamma = 0$, the Whittle index with $\gamma = 0.99$, Thompson Sampling, and the Round Robin strategy. Our analysis revealed that the activation parameter $\alpha$ varies across grid elements due to localized cellular dynamics, prompting the selection of the Whittle index policy as derived in Corollary III.3, which accounts for such heterogeneity.

The results demonstrate that the Whittle index policy significantly outperforms all other methods, achieving a $93\%$ reduction in cumulative regret compared to Round Robin at $T = 240$. Thompson Sampling follows as a competitive alternative, yet its regret is almost $10$ times larger than the Whittle policy, reflecting its suboptimal well selection due to limited adaptation to restless dynamics. The purely myopic Whittle index policy ($\gamma = 0$) underscores the benefit of incorporating future rewards, but it performes slightly worse than the farsighted Whittle index policy ($\gamma = 0.99$). The Round Robin method, which represents the current de facto standard of most labs, proves the least effective, highlighting the potential for substantial improvement with the proposed policy.
\begin{figure}
    \centering
    \includegraphics[width=0.5\linewidth]{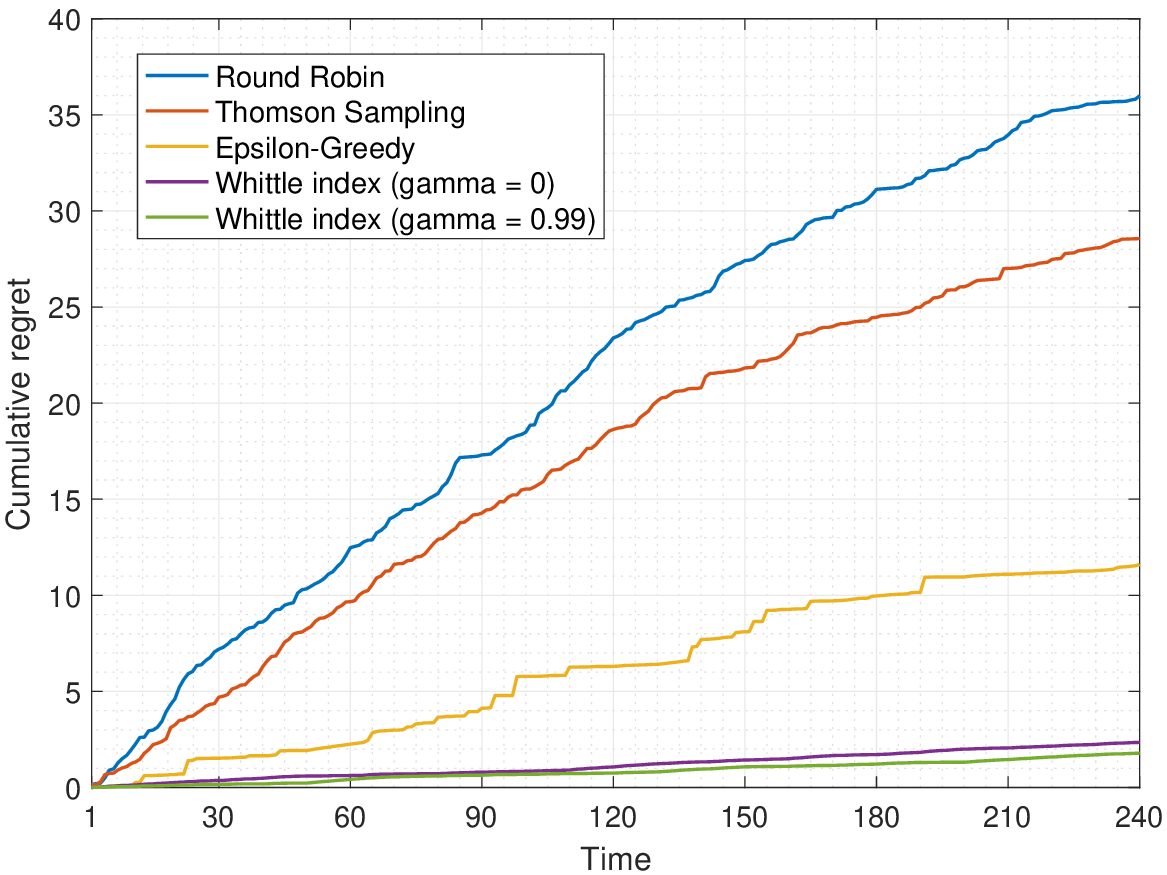}
    \caption{Cumulative strong regret over $240$ decision steps in the biological imaging dataset, comparing the proposed Whittle index policy ($\gamma = 0$ and $\gamma = 0.99$) with Round Robin, $\epsilon$-Greedy, and Thompson Sampling. Our proposed Whittle index policy achieves the lowest regret, demonstrating superior well selection under restless multi-process multi-arm dynamics.}
    \label{fig:regret_biology}
\end{figure}

\section{Conclusions}
\label{sec:conclusions}
In this paper, we introduced a novel formulation of restless multi-armed bandits tailored to systems comprising multiple independent Markov processes per arm, and we addressed the computational challenges inherent in large-scale decision-making under partial observability. By establishing a rigorous mathematical framework for the transient and asymptotic behaviors of single and aggregated Markov processes, we derived closed-form expressions and limiting approximations that enable scalable representations of system dynamics. Leveraging these insights, we developed a Whittle index policy based on minimal sufficient statistics, providing explicit forms for different discrete- and continuous-time settings. Numerical simulations validated the policy's superior performance over established baselines such as Round Robin, Bayesian UCB, and Thompson Sampling, exhibiting also a cumulative regret that scaled favorably under varying number of arms and processes with more than a $37\%$ reduction in cumulative regret. Experimental results were also performed on a biological microscopy dataset, modeling cellular wells as arms with transient activation events (fluorescence signals) drawn from high-throughput imaging experiments. Our proposed index policy in this setup demonstrated substantial improvements, capturing $93\%$ more biologically relevant events compared to cycling through the wells, highlighting the policy's efficacy in resource-constrained settings where the microscope can only image one well at a time.

Overall, our contributions advance the state of the art in multi-armed bandit problems by extending the Whittle index to multi-process arms, offering a principled balance between optimality and tractability in restless environments. The framework's applicability extends beyond smart microscopy to resource-constrained scenarios in experimental sciences, such as adaptive sampling in high-throughput drug screening or real-time monitoring in dynamic networks. Future work could explore the regret bounds for the proposed policy, the integration with deep reinforcement learning for non-Markovian extensions, or the empirical validations in additional domains like cybersecurity or renewable energy management, where the study of aggregated stochastic processes is crucial for optimal decision-making.

\bibliography{references}
\bibliographystyle{IEEEtran}

\appendix
\subsection{Statistical properties of infinitely many independent Markov processes}
\label{app:proof-TranProb_infiniteProcesses}
Given $Y(h)=j$, there are exactly $j$ processes in state $1$ and $N-j$ processes in state $0$ at time $h$. Since the processes are independent and identically distributed, we can analyze their collective behavior by considering the transitions of these two groups separately over $m$ time steps. Define $Y^{\mathsf{AA}}(h+m)$ as the number of processes that are in state $1$ at time $h$ and remain in state $1$ at time $h+m$, and define $Y^{\mathsf{IA}}(h+m)$ as the number of processes that are in state $0$ at time $h$ and transition to state $1$ by time $h+m$. The total number of active processes at time $h+m$ is then
\begin{equation}
    Y(h+m) = Y^{\mathsf{AA}}(h+m) + Y^{\mathsf{IA}}(h+m).
\end{equation}

Since all the processes are independent, $Y^{\mathsf{AA}}(h+m)$ and $Y^{\mathsf{IA}}(h+m)$ are independent random variables. The distribution of $Y^{\mathsf{AA}}(h+m)$ follows from the 
$j$ processes starting in state $1$, each of which remains in state $1$ after $m$ steps with probability
\begin{equation*}
    q_m = [\bm{P}^m]_{11} = \frac{1}{\alpha+\beta}\Big[ \alpha + \beta(1-\alpha-\beta)^m \Big], \qquad \forall m\geq0.
\end{equation*}
Similarly, $Y^{\mathsf{IA}}(h+m)$ arises from the $N-j$ processes starting in state $0$, each transitioning to state $1$ after $m$ steps with probability
\begin{equation*}
    p_m = [\bm{P}^m]_{10} = \frac{1}{\alpha+\beta}\Big[ \alpha- \alpha (1-\alpha-\beta)^m \Big], \qquad \forall m\geq0.
\end{equation*}
Hence, $Y^{\mathsf{AA}}(h+m)$ and $Y^{\mathsf{IA}}(h+m)$ are both independent binomial random variables, each defined as
\begin{equation}
    Y^{\mathsf{AA}}(h+m)\mid Y(h)=j \sim \text{Binomial}(j,q_m),\qquad \text{and} \qquad Y^{\mathsf{IA}}(h+m)\mid Y(h)=j \sim \text{Binomial}(N-j,p_m).
\end{equation}

Now, setting $\alpha=\bar{\alpha}/N$ and substituting into the $m$-step transition probabilities, we obtain
\begin{align}
    q_m & = \frac{\frac{\bar{\alpha}}{N}}{\frac{\bar{\alpha}}{N} + \beta} + \frac{\beta}{\frac{\bar{\alpha}}{N} + \beta}\left(1 - \frac{\bar{\alpha}}{N} -  \beta \right)^m, \\
    p_m & = \frac{\frac{\bar{\alpha}}{N}}{\frac{\bar{\alpha}}{N} + \beta} \left[1 - \left(1 - \frac{\bar{\alpha}}{N} -  \beta \right)^m\right].
\end{align}
Using the geometric series formula $\frac{1}{1-x} = \sum_{n=0}^\infty x^n$, the inverse of the denominator $\frac{\bar{\alpha}}{N} + \beta = \beta\left(1+\frac{\bar{\alpha}}{N\beta}\right)$ approximates to
\begin{equation*}
     \frac{1}{\frac{\bar{\alpha}}{N} + \beta} = \frac{1}{\beta}\left(1 - \frac{\bar{\alpha}}{N\beta} +\mathcal{O}\left(\frac{1}{N^2}\right)\right).
\end{equation*}
Moreover, using the binomial expansion $(1-x)^m = 1-mx+\mathcal{O}(x^2)$ for $x=\frac{\bar{\alpha}}{N(1-\beta)}$, the exponent $\left(1 - \frac{\bar{\alpha}}{N} -  \beta \right)^m$ in both transition probabilities approximates to
\begin{equation*}
    \left(1 - \frac{\bar{\alpha}}{N} -  \beta \right)^m = (1-\beta)^m\left(1 - \frac{\bar{\alpha}}{N(1-\beta)}\right)^m = (1-\beta)^m\left(1 - \frac{m\bar{\alpha}}{N(1-\beta)} + \mathcal{O}\left(\frac{1}{N^2}\right)\right)
\end{equation*}
Putting all together, we have that,
\begin{equation*}
    q_m = \left[\frac{\bar{\alpha}}{N\beta}  + (1-\beta)^m\left(1 - \frac{m\bar{\alpha}}{N(1-\beta)} + \mathcal{O}\left(\frac{1}{N^2}\right)\right)\right]\left(1 - \frac{\bar{\alpha}}{N\beta} + \mathcal{O}\left(\frac{1}{N^2}\right)\right) = (1-\beta)^m + \mathcal{O}\left(\frac{1}{N}\right).
\end{equation*}
and
\begin{equation*}
    p_m = \frac{\bar{\alpha}}{N\beta} \left(1 - \frac{\bar{\alpha}}{N\beta} +\mathcal{O}\left(\frac{1}{N^2}\right)\right)\left(1-(1-\beta)^m + \mathcal{O}\left(\frac{1}{N}\right)\right) = \frac{\bar{\alpha}}{N\beta}(1-(1-\beta)^m) + \mathcal{O}\left(\frac{1}{N^2}\right).
\end{equation*}

For large $N$ and fixed $m$, observe that the higher-order terms in $q_m$ vanish and the transition probability of remaining active converges to $(1-\beta)^m$, whereas $p_m$ scales as $\mathcal{O}\left(\frac{1}{N}\right)$, and $Np_m\to\frac{\bar{\alpha}}{\beta}[1-(1-\beta)^m]$. Considering this, as $N\to\infty$, the binomial term $Y^{\mathsf{AA}}(h+m)$ converges in distribution to that of a binomial random variable,
\begin{equation}
    Y^{\mathsf{AA}}(h+m)\mid Y(h)=j \sim \text{Binomial}(j,q_m) \; \overset{d}{\longrightarrow} \; Z \sim \text{Binomial}\left(j,(1-\beta)^m\right),
    \label{eq:conDist_AA}
\end{equation}
and, using Le Cam's theorem~\cite{le1960approximation}, the binomial term $Y^{\mathsf{IA}}(h+m)$ converges in distribution to that of a Poisson random variable,
\begin{equation}
    Y^{\mathsf{IA}}(h+m)\mid Y(h)=j \sim \text{Binomial}(N-j,p_m) \; \overset{d}{\longrightarrow} \; W \sim \text{Poisson}\left(\frac{\bar{\alpha}}{\beta}\left(1-(1-\beta)^m\right)\right).
    \label{eq:conDist_IA}
\end{equation}
Moreover, since $Y^{\mathsf{AA}}(h+m)$ and $Y^{\mathsf{IA}}(h+m)$ are independent, so are their limits $Z$ and $W$. This establishes the distributional convergence stated in Lemma \ref{lemma:convergence_nrof_chains}. 

Now we focus on the expected value. Since $Y^{\mathsf{AA}}(h+m)$ and $Y^{\mathsf{IA}}(h+m)$ are independent, and their distributions converge as show in \eqref{eq:conDist_AA} and \eqref{eq:conDist_IA}, the limit of the conditional expectation can be expressed as
\begin{align}
    \mathbb{E}\big[Y(h+m)  \mid Y(h) = j\big] & = \mathbb{E}\big[Y^{\mathsf{AA}}(h+m)  \mid Y(h) = j\big] + \mathbb{E}\big[Y^{\mathsf{IA}}(h+m)  \mid Y(h) = j\big] = j q_m + (N-j)p_m \nonumber \\ 
    &= j \left( (1 - \beta)^m + \mathcal{O}\left(\frac{1}{N}\right) \right) + (N - j) \left( \frac{\bar{\alpha}}{N \beta} \left(1 - (1 - \beta)^m\right) + \mathcal{O}\left(\frac{1}{N^2}\right) \right).
\end{align}
where we used that the expectation of a binomial random variable is $\mathbb{E}[\text{Binomial}(n,p)]=np$. For the first term, since $j$ is finite, $j\mathcal{O}\left(\frac{1}{N}\right)\rightarrow 0$ as $N\rightarrow\infty$, so $jq_m\rightarrow j(1-\beta)^m$. Similarly, for the second term, since $j$ is fixed, $(N-j)/N\rightarrow 1$ as $N\rightarrow\infty$, and the higher-order terms $(N-j)\mathcal{O}\left(\frac{1}{N^2}\right) = \mathcal{O}\left(\frac{1}{N}\right) \rightarrow 0$. Putting all together, we obtain that the conditional expectation in the limit is
\begin{equation*}
    \lim_{N\rightarrow\infty} \mathbb{E}\big[Y(h+m)  \mid Y(h) = j\big] = j(1-\beta)^m + \frac{\bar{\alpha}}{\beta}\Big(1-(1-\beta)^m\Big).
\end{equation*}
This concludes the proof. \hfill $\blacksquare$

\subsection{Statistical properties of infinitely many independent continuous-time Markov processes}
\label{app:proof-TranProb_CTMC}
The transition probability matrix $\bm{P}$ of the discrete-time Markov processes reflects the directions of the transitions between states but does not account for the holding time in each of these states. To bridge the discrete-time framework to the continuous-time limit, let $\bm{B}(t)$ denote the continuous-time transition probability matrix, satisfying the Chapman-Kolmogorov equations
\begin{equation}
    \frac{\mathrm{d}}{\mathrm{d}t} \bm{B}(t) = \bm{Q} \bm{B}(t), \quad \text{with} \quad \bm{B}(0) = \bm{I},
\end{equation}
where $\bm{Q}$ is the generator matrix, and $\bm{I}$ is the identity matrix. The solution of this linear differential equation is given by the matrix exponential
\begin{equation}
    \bm{B}(t) = e^{\bm{Q}t}.
    \label{eq:matrix_exponentiation}
\end{equation}
If we further consider that the discrete-time process approximates the continuous-time evolution via the time step $\Delta_t>0$, such that any discrete time step $m\geq0$ corresponds to time $t=m\Delta_t$, the transition matrix $\bm{P}$ relates to the generator matrix $\bm{Q}$ over the infinitesimal time interval $\Delta_t$ via first-order Taylor expansion of the matrix exponential
\begin{equation}
    \bm{P} = e^{\bm{Q}\Delta_t} = \bm{I} + \bm{Q}\Delta_t + \mathcal{O}(\Delta_t^2),
\end{equation}
and the generator matrix $\bm{Q}$ may be recovered via
\begin{equation}
    \bm{Q} = \lim_{\Delta_t\to 0} \; \frac{1}{\Delta_t} \left[\bm{P} - \bm{I} \right].
\end{equation}
This formalism justifies the approximation of a continuous-time Markov process by a discrete-time process with transition probabilities scaled by $\Delta_t$. Consequently, adopting the parametrization $\alpha=\bar{\alpha}\Delta_t/N$ and $\beta = \bar{\beta}\Delta_t$ yields that
\begin{equation}
    \bm{Q} = \begin{bmatrix}
        -\bar{\alpha}/N & \bar{\beta} \\
        \bar{\alpha}/N & -\bar{\beta}
    \end{bmatrix}.
    \label{eq:generator_matrix}
\end{equation}

Given the diagonalization $\bm{Q}=\bm{V}\bm{\Lambda}\bm{V}^{-1}$, the state transition probability over a time period $\tau\geq 0$, defined via the matrix exponentiation in \eqref{eq:matrix_exponentiation}, can be obtained as
\begin{equation}
    \bm{B}(\tau) = e^{\bm{Q}\tau} = \bm{V}e^{\bm{\Lambda}\tau}\bm{V}^{-1} = \frac{1}{\frac{\bar{\alpha}}{N} + \bar{\beta}} \begin{bmatrix}
    \bar{\beta} & -1  \\
    \frac{\bar{\alpha}}{N} & 1
    \end{bmatrix}
    \begin{bmatrix}
    1 & 0 \\
    0 & e^{-\left(\frac{\bar{\alpha}}{N} + \bar{\beta}\right)}
    \end{bmatrix}
    \begin{bmatrix}
    1 & 1 \\
    -\frac{\bar{\alpha}}{N} & \bar{\beta}
    \end{bmatrix}.
\end{equation}
Evaluating its entries yields our two probabilities of interest, namely,
\begin{equation}
    q_\tau = \left[ \bm{B}(\tau) \right]_{11} = e^{-\bar{\beta}\tau} + \mathcal{O}\left(\frac{1}{N}\right), \quad \text{and} \quad p_\tau = \left[ \bm{B}(\tau) \right]_{10} = \frac{\bar{\alpha}}{N\beta}\left(1-e^{-\bar{\beta}\tau}\right) + \mathcal{O}\left(\frac{1}{N^2}\right).
    \label{eq:expansion(t)ransProb}
\end{equation}

We now turn our attention to characterizing the conditional distribution $Y(t+\tau)\mid Y(t)=j$. Given $Y(t)=j$, there are exactly $j$ processes in state $1$ and $N-j$ processes in state $0$ at time $t$. As such, we can express $Y(t+\tau)$ as the sum of two independent components
\begin{equation}
    Y(t+\tau) = Y^{\mathsf{AA}}(t+\tau) + Y^{\mathsf{IA}}(t+\tau), 
\end{equation}
where $Y^{\mathsf{AA}}(t+\tau)$ represents the number of initially active processes at time $t$ that remain active at time $t+\tau$, and $Y^{\mathsf{IA}}(t+\tau)$ represents the number of initially inactive processes at time $t$ that become active at time $t+\tau$. Given the independence of the processes, $Y^{\mathsf{AA}}(t+\tau)$ and $Y^{\mathsf{IA}}(t+\tau)$ are independent random variables with conditional probability distributions
\begin{equation}
    Y^{\mathsf{AA}}(t+\tau)\mid Y(t)=j \sim \text{Binomial}(j,q_\tau),\qquad \text{and} \qquad Y^{\mathsf{IA}}(t+\tau)\mid Y(t)=j  \sim \text{Binomial}(N-j,p_\tau).
\end{equation}

Considering the above, define the probability generating function (PGF) of $Y(t+\tau)$ conditioned on $Y(t)=j$ as
\begin{equation}
    G_\tau(s) = \mathbb{E}\bigl[s^{\,Y(t+\tau)}\mid Y(t)=j\bigr] = \bigl[1+(s-1)q_\tau\bigr]^j \bigl[1+(s-1)p_\tau\bigr]^{N-j},
    \label{eq:PGF_conditional}
\end{equation}
where we used that the PGF of a bernoulli random varible $\text{Binomial}(n,p)$ is $g(s)=\bigl[1+(s-1)p\bigr]^n$. Using the expansions for $q_\tau$ and $p_\tau$ in \eqref{eq:expansion(t)ransProb}, we now proceed to analyze each factor in the limit $N\to\infty$. For the first term,
\begin{equation}
    \lim_{N\to\infty} \bigl[1+(s-1)q_\tau\bigr]^j = \lim_{N\to\infty} \left[1+(s-1)\left(e^{-\bar{\beta}\tau} + \mathcal{O}\left(\frac{1}{N}\right)\right)\right]^j = \left(1+(s-1) e^{-\bar{\beta}\tau}\right)^j,
\end{equation}
which corresponds to the PGF of a $\text{Binomial}\left(j, e^{-\bar{\beta}\tau}\right)$ random variable. For the second term, consider
\begin{equation}
    \bigl[1+(s-1)p_\tau\bigr]^{N-j}  = \exp\left[(N-j)\log\left[1+(s-1)\left( \frac{\bar{\alpha}}{N\beta}\left(1-e^{-\bar{\beta}\tau}\right) + \mathcal{O}\left(\frac{1}{N^2}\right) \right)\right]\right].
\end{equation}
Using the logarithmic expansion $\log(1+x)=\sum_{n=1}^\infty (-1)^{n+1} x^n/n$ for $x=(s-1)\frac{\bar{\alpha}}{N\beta}(1-e^{-\bar{\beta}\tau})$, we have
\begin{equation}
    \bigl[1+(s-1)p_\tau\bigr]^{N-j} = \exp\left[(N - j) \left( (s-1) \frac{\bar{\alpha}}{N \beta} \left(1 - e^{-\bar{\beta}\tau}\right) + \mathcal{O}\left(\frac{1}{N^2}\right) \right)\right],
\end{equation}
such that, in the limit,
\begin{equation}
    \lim_{N\to\infty} \bigl[1+(s-1)p_\tau\bigr]^{N-j} = \exp\left[(s-1) \frac{\bar{\alpha}}{\beta} \left(1 - e^{-\bar{\beta}\tau}\right)\right],
\end{equation}
which corresponds to the PGF of a $\text{Poisson}\left(\frac{\bar{\alpha}}{\beta} (1 - e^{-\bar{\beta}\tau})\right)$ random variable.

Combining the two limits, the point-wise limit of the PGF in \eqref{eq:PGF_conditional} for fixed $s$ is
\begin{equation}
    \lim_{N \to \infty} G_\tau(s) = \underbrace{\left(1+(s-1) e^{-\bar{\beta}\tau}\right)^j}_{\text{Binomial term}} \; \underbrace{\exp\left[(s-1) \frac{\bar{\alpha}}{\beta} \left(1 - e^{-\bar{\beta}\tau}\right)\right]}_{\text{Poisson term}}.
    \label{eq:limit_PGF}
\end{equation}

By Lévy's continuity theorem~\cite{williams1991probability}, we can conclude that $Y(t+\tau)\mid Y(t)=j$ converge in distribution to the sum of two independent random variables $Z$ and $W$, each corresponding to the characteristic functions of $Y^{\mathsf{AA}}(t+\tau)\mid Y(t)=j$ and $Y^{\mathsf{IA}}(t+\tau)\mid Y(t)=j$ in the limit, with
\begin{align}
    Y^{\mathsf{AA}}(t+\tau)\mid Y(t)=j & \sim \text{Binomial}(j,q_\tau) \; \overset{d}{\longrightarrow} \; Z \sim \text{Binomial}\left(j, e^{-\bar{\beta}\tau}\right),\\[.3cm]
    Y^{\mathsf{IA}}(t+\tau)\mid Y(t)=j & \sim \text{Binomial}(j,p_\tau) \; \overset{d}{\longrightarrow} \; W \sim \text{Poisson}\left(\frac{\bar{\alpha}}{\beta} \left(1 - e^{-\bar{\beta}\tau}\right)\right).
\end{align}
This establishes the distributional convergence stated in Lemma \ref{lemma:convergence_nrof_chains_CTMC}.

Finally, by differentiating the limit PGF in \eqref{eq:limit_PGF} with respect to $s$ and evaluating at $s=1$, we obtain the limiting conditional expectation
\begin{equation}
    \lim_{N\to\infty} \mathbb{E}\left[Y(t+\tau) \mid Y(t) = j\right] = \frac{\partial}{\partial s}\lim_{N\to\infty} G_\tau(s)\Big|_{s=1} = \frac{\bar{\alpha}}{\bar{\beta}}\Big(1-e^{-\bar{\beta}\tau}\Big) + je^{-\bar{\beta}\tau}.
\end{equation}
This concludes the proof.  \hfill $\blacksquare$

\subsection{Whittle index}
\label{app:proof-Whittle-Index}
The restless multi-armed bandit problem with subsidy $\lambda$ for passivity is a discounted Markov decision problem with discount factor $\gamma \in (0, 1)$, where the value function of any arm $k$ in state $(j_k, m_k)$ satisfies the Bellman equation
\begin{equation}
     V (j_k, m_k; \lambda, \gamma) = \max \Big\{  V^\mathcal{P}(j_k, m_k; \lambda, \gamma),  V^\mathcal{A}(j_k, m_k; \lambda, \gamma)\Big\}.
\end{equation}
Here $V^\mathcal{P}$ and $V^\mathcal{A}$ represent the value functions under passive and active actions, respectively, and are defined as
\begin{align}
    V^\mathcal{P}(j_k, m_k; \lambda, \gamma) &= \lambda + \gamma V (j_k,m_k+1;\lambda,\gamma ) \label{eq:passive-valueFunc} \\
    V^\mathcal{A}(j_k, m_k; \lambda, \gamma) &= \mathbb{E}\left[Y_k\mid j_k, m_k\right] + \gamma \sum_{y=0}^{N_k} \mathcal{P}(Y_k = y\mid j_k,m_k)V (y,0;\lambda,\gamma). \label{eq:active-valueFunc}
\end{align}
The passive value function represents the case where arm $k$ is not selected, and it captures the immediate subsidy $\lambda$ plus the discounted future reward when the arm remains unobserved for one additional time step. Contrarily, the active value function represents the case where arm $k$ is selected, and it comprises the expected immediate reward from the observation plus the discounted expected future reward after resetting the belief state.

At this stage, we must establish that the problem is indexable. The indexability condition ensures that the Whittle index is well-defined and provides a principled approach to decomposing the multi-armed problem into tractable single-arm subproblems~\cite{whittle1988restless}. For that, define the passive set for arm $k$ as the set of states where the passive action is optimal under subsidy $\lambda$,
\begin{equation}
    \Omega_k(\lambda,\gamma) = \Big\{(j_k,m_k)\in \{0,1,\dots,N_k\}\times\mathbb{N}_0 : V^\mathcal{P}(j_k, m_k; \lambda, \gamma)\geq  V^\mathcal{A}(j_k, m_k; \lambda, \gamma)\Big\}.
    \label{eq:passive-set-indexability}
\end{equation}
Indexability requires that $\Omega_k(\lambda,\gamma)$ is monotone non-decreasing in $\lambda$. We note that our single-arm model falls into the subclass of controlled‐restart restless bandits, where taking the active action resets the state according to a known probability distribution (given explicitly in \eqref{eq:transProbMatrix_multipleProcesses}). This model satisfies the sufficient conditions for indexability conditions established in~\cite[Theorem~1]{akbarzadeh2022conditions}. Furthermore, the partially observable variant of our model, in which the state is revealed only when activating an arm, also falls within the indexable families characterized in~\cite[Theorem~1]{akbarzadeh2022partially}, who prove monotonicity of the passive set for partially observable restart bandits. Therefore, since both structural and informational assumptions of our model satisfy the criteria in these works, we omit a full monotonicity proof and rely on the established result.

Given indexability, the Whittle index $\mathcal{W}(j_k,m_k;\gamma)$ can be defined for any arm $k\in\mathcal{K}$ and for any discount factor $\gamma\in(0,1)$ following its theoretical characterization in~\cite[Theorem 4.3]{beutler1985optimal}  as the smallest subsidy $\lambda$ for which the state $(j_k,m_k)$ is part of the passive set,
\begin{equation}
    \mathcal{W}(j_k,m_k;\gamma) = \inf \Big\{\lambda\in\mathbb R : (j_k, m_k)\in \Omega_k(\lambda,\gamma) \Big\}.
\end{equation}
That is, at the Whittle index $\lambda^* = \mathcal{W}(j_k, m_k; \gamma)$, the decision entity is indifferent between making an arm active or leaving it passive. Considering this, and given the linearity of the conditional expectation $\mathbb{E}[Y_k \mid j_k, m_k] = C(m_k) + j_k D(m_k)$ and the recursive structure of the problem, the value function admits an affine representation in $j_k$ at this indifference point,
\begin{equation}
    V^\mathcal{P}(j_k, m_k; \lambda^*, \gamma) = V^\mathcal{A}(j_k, m_k; \lambda^*, \gamma) = A(m_k; \gamma) + j_k \, B(m_k ; \gamma),
    \label{eq:affine_form}
\end{equation}
where $A(m_k; \gamma) $ and $B(m_k; \gamma) $ are functions that depend only on the delay $m_k$ and the discount factor $\gamma$. To verify consistency with the Bellman recursion, substitute the affine form of the future value function and the expression for the expected reward  into the passive and active value functions in \eqref{eq:passive-valueFunc} and \eqref{eq:active-valueFunc},
\begin{align}
    V^\mathcal{P}(j_k, m_k; \lambda^*, \gamma) &= \lambda^* + \gamma A(m_k+1; \gamma) + j_k \,\gamma B(m_k+1 ; \gamma), \\
    V^\mathcal{A}(j_k, m_k; \lambda^*, \gamma) &= \gamma A(0; \gamma) + C(m_k) \big( 1+ \gamma B(0; \gamma)\big)+ j_k \, D(m_k) \big( 1+ \gamma B(0; \gamma)\big).
\end{align}
Both expressions are indeed affine in $j_k$, and its form is preserved under recursion. Note, however, that the affine representation in~\eqref{eq:affine_form} is only valid at the indifference subsidy $\lambda^*$, which is the only point needed to define the index. Outside indifference, the value function is the maximum of two affine functions and hence piecewise affine.

From there, equating the expressions for $V^\mathcal{P}$ and $V^\mathcal{A}$ to the affine form in \eqref{eq:affine_form} yields two key insights. First, the Whittle index must satisfy the condition
\begin{equation}
    \mathcal{W}(j_k,m_k;\gamma) = A(m_k; \gamma) - \gamma A(m_k+1; \gamma) + j_k \,( B(m_k; \gamma) - \gamma B(m_k+1 ; \gamma)),
    \label{eq:lambda_indifference}
\end{equation}
and second, the functions $A(m_k; \gamma)$ and $B(m_k; \gamma)$ must satisfy the recursions
\begin{align}
    A(m_k; \gamma) &= \gamma A(0; \gamma) + C(m_k) \big( 1+ \gamma B(0;\gamma)\big) \label{eq:recursion_A}\\
    B(m_k; \gamma) &= D(m_k) \big( 1+ \gamma B(0;\gamma)\big).  \label{eq:recursion_B}
\end{align}

Setting $m_k = 0$ into the recursions and solving the equations yield
\begin{equation}
    A(0; \gamma) = \frac{C(0)}{(1-\gamma)(1-\gamma D(0))}\quad\quad \text{and} \quad\quad  B(0; \gamma) = \frac{D(0)}{1-\gamma D(0)}.
    \label{eq:initial}
\end{equation}
Then, substituting these into the recursions provides explicit expressions for all $m_k$,
\begin{equation}
     A(m_k; \gamma) = \frac{\gamma C(0)}{(1-\gamma)(1-\gamma D(0))} + \frac{C(m_k)}{1-\gamma D(0)}\quad\quad \text{and} \quad\quad B(m_k; \gamma) = \frac{D(m_k)}{1-\gamma D(0)}.
\end{equation}
Finally, substituting the above expressions into \eqref{eq:lambda_indifference} and simplifying gives the closed-form expression for the Whittle index:
\begin{equation*}
    \mathcal{W}(j_k,m_k;\gamma) = \frac{1}{1-\gamma D(0)}\Big[ C(m_k) - \gamma C(m_k+1) +\gamma C(0)\Big] + \frac{j_k}{1-\gamma D(0)}\Big[D(m_k)-\gamma D(m_k+1)\Big].
\end{equation*}

This concludes the proof.  \hfill $\blacksquare$

\end{document}